\newtheorem{assumption}{Assumption}[section]
\providecommand{\keywords}[1]
{
  \small	
  \textbf{Keywords:} #1
}
\newtheorem{theorem}{Theorem}
\newtheorem{proposition}[theorem]{Proposition}%
\begin{document}


\title{Inference for Diffusion Processes via Controlled Sequential Monte Carlo and Splitting Schemes}
\author[1]{S. Huang}
\author[1]{R. G. Everitt}
\author[1]{M. Tamborrino}
\author[1]{A. M. Johansen}
\affil[1]{\small Department of Statistics, University of Warwick, Coventry,UK}
\maketitle

 \begin{abstract}
We introduce an inferential framework for a wide class of semi-linear stochastic differential equations (SDEs). Recent work has shown that numerical splitting schemes can preserve critical properties of such types of SDEs, 
   give rise to explicit pseudolikelihoods, and hence allow for parameter inference for fully observed processes. Here, under several discrete time observation regimes (particularly, partially and fully observed with and without noise), we represent the implied pseudolikelihood as the normalising constant of a Feynman--Kac flow, allowing its efficient estimation via controlled sequential Monte Carlo and adapt likelihood-based methods to exploit this pseudolikelihood for inference. 
   The strategy developed herein allows us to obtain good inferential results across a range of problems. Using diffusion bridges, we are able to computationally reduce bias coming from time-discretisation without recourse to more complex numerical schemes which typically require considerable application-specific efforts. 
     Simulations illustrate that our method provides an excellent trade-off between computational efficiency and accuracy, under hypoellipticity, for both point and posterior estimation. Application to a neuroscience example shows the good performance of the method in challenging settings.
\end{abstract}

\noindent\keywords{
diffusion bridges, parameter estimation, particle filtering, pseudolikelihood, stochastic differential equation}
\section{Introduction}
Parameter inference for stochastic differential equation (SDE) models is a broad and challenging task which finds application across many disciplines where SDEs play a crucial role, including biology \citep{Ditlevsen2013}, neuroscience \citep{FitzHugh1955}, and finance \citep{Glasserman2003}, among many (see, e.g., \citet[Chapter 7]{kloedenplaten} {and \cite{Iacus}} for diverse examples). 
The fundamental difficulty is that most SDEs do not admit an explicit transition density in a closed form, which complicates likelihood-based inference. A system of equations that mimics, in discrete time, the dynamics of such SDEs, is typically employed. Such approximations are frequently referred to as time-discretisations or numerical schemes, see e.g. \citet[Chapter 9]{kloedenplaten}. With their help, we seek to compute the implied pseudolikelihood as a proxy for the likelihood. Here, further difficulty resides: not all time-discretisations are convergent and able to preserve important model properties, and those that do,  may not give rise to tractable pseudolikelihoods. A systematic approach to the computation/estimation of such likelihoods is therefore needed. Given a mechanism for accessing these pseudolikelihoods, a wide range of standard tools including likelihood maximisation and Markov chain Monte Carlo (MCMC) can then provide inference.

\subsection{Numerical schemes}
The quality of a numerical scheme has a profound impact on inference. We confine our attention to parameter inference for the following class of $d$-dimensional SDEs:
\begin{equation}
\label{classofsde}
    dX_t=F(X_t)dt+\Sigma dW_t=(AX_t+\gamma(X_t))dt+\Sigma dW_t,
\end{equation}
where $F:\mathbb{R}^d \to \mathbb{R}^d$ is the drift coefficient, $A \in \mathbb{R}^{d\times d}$ is a matrix, $\Sigma \in \mathbb{R}^{d\times m}$ is the diffusion coefficient, $\gamma$: $\mathbb{R}^d\to\mathbb{R}^d$ is a function which we assume to be non-linear with polynomial growth, and $W_t$ is an $m$-dimensional Brownian motion. We assume that the SDE starts at $X_0$, whose distribution has bounded moments and a density $\mu$ that is known, independent of the Brownian motion $\{W_t\}_{t\ge 0}$. The simplest and perhaps most intuitive explicit numerical scheme is the Euler--Maruyama (EuM) scheme, which approximates 
{the SDE solution} by fixing the value of the drift and diffusion coefficients over finite time intervals (see, e.g., \citealt[Chapter 10]{kloedenplaten}). However, it was proved in \cite{hutzen} that EuM diverges in finite time for a wide class of SDEs with non-globally Lipschitz coefficients, such as \eqref{classofsde}. Numerical schemes have been developed to address this, many of which do not admit explicit transition densities. In this paper, we explore the use of splitting schemes, a class of explicit schemes recently proposed and analysed  for the class of problems of interest in \cite{buckwar2022splitting}, and find that they enjoy good numerical stability {and property preservation} while having simplicity comparable to that of EuM. {As we will eventually look into hypoelliptic SDEs, i.e. SDEs with a smooth transition density whose diffusion coefficient is degenerate, the numerical preservation of such property (called 1-step hypoellipticity)  is also important, which is a feature enjoyed by the splitting schemes but not EuM \citep{buckwar2022splitting}.} 
\subsection{Observation regimes}
The complexity of inference on SDEs begins to unfold once we start addressing practical concerns. Depending on how the data is collected, a number of observation regimes can arise, each of which requires a different inference strategy.

To introduce the topic precisely, we imagine here a continuous-time diffusion process $\{X_t\}_{t\ge0}$ of dimension $d$, a discrete-time observation process $\{Y_k\}_{k=0}^{M}$ {(with $Y_k:=Y_{t_k}$)} of {dimension} $d_y\le d_x$, some noise terms $\{\epsilon_k\}_{k=0}^M$, taking values in $\mathbb{R}^{d'}$ that are serially independent, an invertible function $\Lambda:\mathbb{R}^{d}\rightarrow\mathbb{R}^{d'}$, and a non-invertible function $\Xi:\mathbb{R}^{d}\rightarrow\mathbb{R}^{d'}$.  The three most widespread regimes are instantaneous observation at discrete time points, either fully or partially:
\begin{compactitem} 
    \item[(i)] full observation without noise, i.e. $Y_k=\Lambda(X_k)$ so one has full information about the process at the observation times ${t_k}$;
    \item[(ii)] partial observation without noise, i.e. $Y_k=\Xi(X_k)$;
    \item[(iii)] full or partial observation with additive observation noise, i.e. $Y_k=\Lambda(X_k)+\epsilon_k$ or $Y_k=\Xi(X_k)+\epsilon_k$.
\end{compactitem}%

In case (iii), where noise term $\epsilon_t$ is present, the structure of the diffusion process and of its observations becomes exactly that of a state space model (SSM). See \cite{inferssm} for a survey of static parameter inference for SSMs. While the methodology presented {here} most straightforwardly applies to SSMs, we also develop methodology applicable to the cases where observations are made without measurement noise, which counter-intuitively,  are more challenging computationally. 

In the noiseless fully-observed case (i), given a splitting scheme of the sort we consider, the pseudolikelihood can be extracted exactly, {e.g. for} $\Lambda$ being the $identity$ map, as  proposed in \cite{pilipovic2022parameter}. We emphasise that {such inference could be further improved, as the} bias arising from time-discretisation can be made arbitrarily small under mild regularity conditions by appealing to a data augmentation method known as diffusion bridge sampling, see \cite{Durham}. Asymptotically, this would lead to \textit{exact} inference {for convergent numerical schemes, as illustrated here in Section~\ref{section5}}. 

Finally, case (ii) is particularly common in applications, in which only some of the coordinates are observed, e.g. for $\Xi$ being a $projection$ map. This is typically the case in, e.g., neuroscience,  when modelling single or population neural activity. We will represent the associated intractable likelihoods as normalising constants that will then be efficiently targeted using sequential Monte Carlo (SMC).
\subsection{Sequential Monte Carlo}
SMC methods were popularised by \cite{Gorden} in the filtering context as the bootstrap particle filter (BPF); a recent overview is provided by \cite{introtosmc}. Serving as an estimation tool for the marginal likelihood of SSMs, SMC will lie at the heart of our proposed methodology. In the partially observed without noise case (ii), 
\cite{jrssb} employed a bootstrap particle filter (and backward smoother) targeting the distribution of the unobserved coordinates, {using} their proposed numerical scheme which has a transition density that lies within the natural exponential family, to perform a Monte Carlo expectation-maximisation type optimisation. We develop a different approach, employing the recently developed iterated/controlled {SMC method} 
for {both} smoothing and likelihood estimation (\citealt{iapf,cSMC}), which we will collectively term controlled sequential Monte Carlo (cSMC), to provide extremely stable estimates of the pseudolikelihoods at small computational cost (noting that \textit{conditional} SMC, often referred to as CSMC, is a different algorithm).
\subsection{Review of alternative approaches}
Inference for SDEs is well-studied, see \cite{SDEreview} for a recent review. We briefly mention here several alternative methods that are not unrelated to our cSMC-based approach, in the sense that they allow asymptotically exact inference. The exact simulation approach developed in \cite{exactsim} outlines how, under very restrictive conditions, using rejection sampling, one may sample exactly from the law of the diffusion process, motivating a host of simulation-based {techniques} with considerable computational costs. 
The manifold MCMC approach developed by \cite{manifoldmcmc} is a Bayesian method that is applicable to diffusion processes across a wide range of settings with the ability to reduce discretisation errors. The method circumvents explicitly extracting the likelihood by augmenting the target space with latent diffusion paths and using a constrained Hamiltonian Monte Carlo method to sample from the joint posterior. The likelihood is thus implicitly handled via the posterior distribution, but not explicitly computed. Pursuing the conditional law of a diffusion given a fixed end point observation gives rise to the class of methods that we will call \textit{diffusion bridges}, see recent advances by \cite{Whitaker2017} and \citet[Chapter 3]{malorythesis} which rely on tailored importance sampling regimes to estimate the improved likelihoods. Further work along this line includes \cite{heng2022simulatingdiffusionbridgesscore} for designing guided time-reversal diffusion bridges,  motivating \cite{boserup2024parameterinferencedifferentiablediffusion} to develop an inference regime again based on importance sampling. The most recent work by \citet[Chapter 2]{iguchithesis} developed an explicit sampling scheme and derived closed-form transition density to compute its approximate likelihood,  motivating a likelihood-based inference approach. This method, while enjoying high accuracy and being robust against hypoellipticity, requires considerable application-specific effort.
\subsection{Overview of the paper}
In Sections~\ref{section2} and \ref{section3}, we provide the necessary background on time-discretisation schemes and cSMC, respectively, the two elements upon which our methodology builds. 
 Section~\ref{section4} {makes three main contributions: it} represents various forms of likelihood as integrals, which we eventually transform into quantities which cSMC can readily estimate; {it} combines the splitting scheme with the derived identities to form a computationally tractable method to estimate the implied pseudolikelihood; and {it} reviews two computational tools suitable for conducting inference using cSMC-approximated pseudolikelihoods. In Section~\ref{section5}, numerical results are presented for a univariate cubic SDE and the {two-dimensional, partially observed hypoelliptic} stochastic FitzHugh--Nagumo (FHN) model. We examine the explosion of the EuM scheme and the bias-reduction effect of our proposed methodology in the former example, and the potential of the method for both frequentist and Bayesian inference in realistic settings in the latter, where both simulated and real data are considered.
\section{Time-discretisation schemes}\label{section2} 
In the scenario we consider, $M+1$ data are collected at fixed and known discrete time points $t_0<t_1<\ldots<t_M=T $. Where there is no ambiguity, e.g. for equidistant time points, we will, here onwards, always use abbreviated subscripts $k$ for $t_k$, e.g. $X_{k}\equiv X_{t_k}$ for discrete data points. Numerical schemes are time-discretisation methods that aim to approximate the true solution of the SDE at such time points (and, hence, the transition density between them). Upon the existence of a unique strong solution of the SDE, we will assume that the process gives rise to a smooth transition density, see e.g. \cite{buckwar2022splitting} and \cite{jrssb} for the discussion of sufficient conditions. The true transition density, i.e. the density of the law of the process at a time $t$ conditional upon its state at an earlier time, say $s<t$, is denoted by $f^{*}_{s,t}(x_t|x_s)$. It typically does not admit an explicit form. Given an approximation ${f}_{s,t}(x_t|x_s)$ which \textit{does} admit an explicit form, it is possible to evaluate the implied pseudolikelihood, using $f_k$ to denote $f_{t_{k-1},t_k}$:
\begin{eqnarray}
\label{pseudolikelihood}
    {f}(x_{{0:M}})&=&\mu(x_0)\prod_{k=1}^{M} {f}_k(x_{{k}}|x_{{k-1}}),
\end{eqnarray}
where $x_{{0:M}}$ denotes the available data $x_{0},\ldots, x_{M}$. 
Section 6 of \cite{pilipovic2022parameter} provides a simulation study in the fully observed case (i) that explores the asymptotic properties of a MLE based on pseudolikelihoods in the form of \eqref{pseudolikelihood} {for the considered class of SDEs \eqref{classofsde} {based on the numerical schemes of \cite{buckwar2022splitting}.}} We aim to improve upon this and develop inference methods for the same class of SDEs {and time-discretisation scheme}, but under much wider observation regimes.

\subsection{Assumptions}
Let $\lVert\cdot\rVert$ and $\langle\cdot,\cdot\rangle$ denote the Euclidean norm and inner product on $\mathbb{R}^d$, respectively, and let $\mathcal{C}^2(\mathbb{R}^d)$ denote the class of twice-continuously-differentiable functions from $\mathbb{R}^d$ to $\mathbb{R}$. For the stated class of SDE \eqref{classofsde}, we impose the following:
\begin{assumption}
\label{assumption1}
There exist constants $c_1$, $c_2>0$ and $\chi\ge1$, such that\textcolor{brown}{,} for all $x,y\in\mathbb{R}^d$\textcolor{brown}{,} the non-linear function $\gamma\in\mathcal{C}^2(\mathbb{R}^d)$ satisfies:
\begin{eqnarray*}
   &(a)& \langle \gamma(x) - \gamma(y), x-y \rangle\le c_1\lVert x-y\rVert^2,\\
   &(b)& \lVert \gamma(x)-\gamma(y)\rVert^2\le c_2(1+\lVert x\rVert^{2\chi-2}+\lVert y\rVert^{2\chi-2})\lVert x-y\rVert^2.
\end{eqnarray*}
\end{assumption}
This is an instance of a locally Lipschitz drift, with constraints to control its growth so that the existence of a unique strong solution is guaranteed and the moments are finite, see discussions in, e.g., \cite{strongsolution}. This is a weaker assumption on the drift than the commonly-imposed global Lipschitz condition \citep[Part 2]{kloedenplaten} often used to control its growth.  
\subsection{Euler--Maruyama}
For ease of representation, we consider equidistant discrete times with step size $\Delta$, i.e. $t_k = k \Delta$.  The EuM 
approximation of \eqref{classofsde} is defined as:
\begin{equation}\label{EuM}
X^{\textrm{EuM}}_{k}=X^{\textrm{EuM}}_{{k-1}}+\Delta F(X^{\textrm{EuM}}_{{k-1}})+\xi^{\textrm{EuM}}_{\Delta,k},\qquad\xi^{\textrm{EuM}}_{\Delta,k}\overset{\text{iid}}{\sim}N(0,\Delta\Sigma\Sigma^T).
\end{equation}In the case of SDEs with \textit{globally} Lipschitz drift and diffusion coefficients, which is a stronger condition than Assumption~\ref{assumption1}, EuM  
is a well-known instance of a strong Taylor approximation with conditionally Gaussian transition densities. The strong convergence  in the $L_2$ sense will effectively prevent divergence of the simulated path $X^{\textrm{EuM}}_k$ at time $t_k$ from the true solution $X_k$, so that we have for all $0\le k\le M$, $\mathbb{E}[||X^{\textrm{EuM}}_k-X_k||^2]^{\frac{1}{2}}\to0$ as $ \Delta\to0$. For non-globally Lipschitz SDEs of the form \eqref{classofsde}, Theorem 2.1 in \cite{hutzen} {gives} a sufficient condition under which the EuM approximation can be proven to be $L_p$ divergent in finite time, in one-dimension. This will be an event happening with increasing probability as the diffusion parameter increases. When this happens, simulation-based methods for inference will be prohibited to various extents depending on the specific approach.

For our stated class of diffusion processes, we require \textit{strongly} convergent {numerical} schemes admitting an explicit transition density which is needed to implement cSMC. To this end, in this work, we advocate the use of splitting schemes.
\subsection{Splitting schemes}
Splitting schemes decompose the initial SDE, here \eqref{classofsde}, into solvable sub-equations, and then compose the resulting solutions iteratively over (short) finite time intervals. Following \cite{buckwar2022splitting}, we split \eqref{classofsde} into two sub-equations:

\begin{tabularx}{\textwidth}{XX}
\begin{equation}
dX^{[1]}_t=AX^{[1]}_tdt+\Sigma dW_t \label{subeq1}
\end{equation}%
&
\begin{equation}
dX^{[2]}_t=\gamma(X^{[2]}_t)dt.\label{subeq2}
\end{equation}%
\end{tabularx}%
\newline
\noindent Equation~\eqref{subeq1} is ``linear in the narrow sense'',  with explicit solution:
\begin{equation}\label{subeq1sol}
X^{[1]}_{k}=e^{A\Delta}X^{[1]}_{{k-1}}+\xi_{\Delta,k}, \qquad
\xi_{\Delta,k}\overset{\text{iid}}{\sim} N\Big(0,\underset{=:C(\Delta)}{\underbrace{\int_0^{\Delta} e^{A(\Delta-s)}\Sigma\Sigma^\intercal(e^{A(\Delta-s)})^\intercal ds}}\Big),
\end{equation}
{where $e^{A\Delta}$ and $C(\Delta)$ denote the matrix exponential and the covariance matrix, respectively. }
Equation (\ref{subeq2}), involving $X_{k}^{[2]}$, is a vector-valued ordinary differential equation (ODE) { which admits a global solution (i.e., it does not explode in finite time) owing to Assumption \eqref{assumption1}, see \cite{HumphriesStuart2002}.} We denote by $\Gamma_{\Delta}:\mathbb{R}^d\to\mathbb{R}^d$ the exact solution of the ODE  evaluated at $t_k$ starting from $t_{k-1}$, i.e. $X^{[2]}_{k}=\Gamma_\Delta(X^{[2]}_{{k-1}})$. If the solution is not available, numerical methods for locally Lipschitz ODEs (e.g., \citealt{Haireretal2000} and \citealt{HumphriesStuart2002}) could be used to approximate it.

\noindent{\bf Lie--Trotter splitting} 
For our semi-linear class of SDE, the {considered} Lie--Trotter scheme composes the solutions $X^{[1]}_{k}$ and $X^{[2]}_{k}$ as (see \cite{buckwar2022splitting}):
\begin{equation}
\label{LT}
    X_{k}^{\textrm{LT}}=
e^{A\Delta}\Gamma_\Delta(X^{\textrm{LT}}_{{k-1}})+\xi_{\Delta,k}.
\end{equation}
We observe crucially that the resulting process is a Markov process, and that the transition density is Gaussian. The resulting numerical scheme, under stated assumptions, is mean-square convergent of order 1, see, e.g., Theorem 2 in \cite{buckwar2022splitting}. This is a property that the EuM demonstrably does not typically enjoy for the considered {locally Lipschitz drift coefficients} satisfying Assumption~\ref{assumption1}, see \cite{hutzen} in the context of forward sampling, and Section~\ref{cubicsde_cs} in {this paper in} the context of diffusion bridge sampling. 

\noindent{\bf Strang splitting} The Strang splitting method employs the same decomposition of the SDE, but differs from  Lie--Trotter in the composition of the solutions to the component differential equations. Here, we consider (see \citealt{buckwar2022splitting}):
\begin{equation}
\label{S}
    X_{k}^{\textrm{S}}= \Gamma_\frac{\Delta}{2}\left(e^{A\Delta}\Gamma_\frac{\Delta}{2}(X^{\textrm{S}}_{{k-1}})+\xi_{\Delta,k}\right),
\end{equation}
which sandwiches solution $X^{[1]}_{k}$ in the middle of two half-steps of solution $X^{[2]}_{k}$. Again, this is a Markov process, and has the same strong order of $L_2$ approximation error as the Lie--Trotter scheme, see Theorem 3.7 \cite{pilipovic2022parameter}. 

It is particularly attractive for inference that the Strang scheme produces a more accurate one-step approximation than Lie--Trotter \citep[Propositions 3.4 and 3.6]{pilipovic2022parameter}.  Recalling that $\Delta=t_k-t_{k-1}$, and defining $X_k^{\textrm{S}}=\Phi^{\textrm{S}}_{\Delta}(X_{{k-1}})$ and $X_k^{\textrm{LT}}=\Phi^{\textrm{LT}}_{\Delta}(X_{{k-1}})$, for all $x\in \mathbb{R}^d$, there it was established that
$\lVert \mathbb{E}[X_{k}-\Phi^{\textrm{S}}_{\Delta}(X_{{k-1}})|X_{{k-1}}=x]\rVert=\mathcal{O}(\Delta^3), \lVert \mathbb{E}[X_{k}-\Phi^{\textrm{LT}}_{\Delta}(X_{{k-1}})|X_{{k-1}}=x]\rVert=\mathcal{O}(\Delta^2)$. These exponents characterise the \textit{one-step weak order} of a numerical scheme. For Lie--Trotter and Strang, this order is $2$ and $3$, respectively. This means that, for a small enough  time step $\Delta$, the average one-step discrepancy between the true process $X_{k}$ and the Strang-approximated process $\Phi^{\textrm{S}}_{\Delta}(X_{{k-1}})$ {is smaller} than that arising from the Lie--Trotter approximation $\Phi^{\textrm{LT}}_{\Delta}(X_{{k-1}})$, for any given previous state $X_{{k-1}}=x$. The pseudolikelihood being a product of transition densities due to the Markov property, will hence on average have a smaller approximation error. 

A complication with using Strang for inference, is that its transition density is not Gaussian. However, it can be expressed as a transformation of a Gaussian random variable and the previous state. In particular, {if the solution of the ODE subequation \eqref{subeq2} is invertible,} the inverse under $\Gamma_\frac{\Delta}{2}$ of $X_{k}$ is given by:
\begin{equation}
\label{eqcond}
    \Gamma_\frac{\Delta}{2}^{-1}(X_{k}^{\textrm{S}})=e^{A\Delta}\Gamma_\Delta \left( \Gamma_\frac{\Delta}{2}^{-1}(X^{\textrm{S}}_{{k-1}})\right)+\xi_{\Delta,k}.
\end{equation}
With this change of variable, \cite{pilipovic2022parameter} obtained the pseudolikelihood and developed MLE inference for fully-observed processes (i). {As the solution $\Gamma_\Delta$ may not be invertible, the Strang scheme is less widely applicable to inference than Lie--Trotter.
  We address this at the end of Section~\ref{non-invertible}.   
 \section{Controlled sequential Monte Carlo}\label{section3}
Viewing the intractable likelihood as the normalising constant of a particular class of models allows SMC methods to provide unbiased estimates of the pseudolikelihood. We will preview the case of observation regime (ii) in this section, before introducing diffusion bridges and revisiting the other regimes in Section~\ref{section4}.

Consider the full process $x_k=(v_k,u_k)$ taking values on $\mathcal{X}=\mathcal{V}\times\mathcal{U}$, where only $\{v_k\}_{k=0}^{M}$ is observed. We want to extract the marginal density of  $v_{0:M}$:
\begin{eqnarray}\label{partial_og}
    f(v_{0:M})&=&\int_{\mathcal{U}^{M+1}}\mu(v_0,u_0)\prod_{k=1}^{M}f_k(v_k,u_k\mid v_{k-1},u_{k-1})du_{0:M},
\end{eqnarray}
where $f$ denotes the partial pseudolikelihood arising from a numerical scheme. We use $f_k$ to denote its transition density at time $t_k$ and $\mu$ some initial distribution. 

Now, let the implied transition density $f_k$  be a Gaussian distribution over $(v_k,u_k)$ conditionally on $(v_{k-1},u_{k-1})$ which can be decomposed, conditional upon $(v_{k-1},u_{k-1})$, as the marginal density of $v_k$ and the conditional of $u_k$ given $v_k$: 
\begin{equation}
\label{decomp}
    f_k(v_k,u_k|v_{k-1},u_{k-1})=f_k^1(v_k|v_{k-1},u_{k-1}) f_k^2(u_k|v_k,u_{k-1},v_{k-1}), 
\end{equation}
where both $f_k^1$ and $f_k^2$ are analytically tractable.

Lie--Trotter splitting fits into this class. Under assumptions detailed in Section~\ref{non-invertible}, 
also the Strang-implied transition density will fit into this class. Then, we write the marginal pseudolikelihood of the first coordinate $v_{0:M}$ as an integral:
\begin{equation}\label{partial_psedo_unbridged}
f(v_{0:M})=\int_{\mathcal{U}^{M+1}}\mu^1(v_0) \mu^2(u_0|v_0) \prod_{k=1}^{M} f_k^1(v_{k}|v_{k-1},u_{k-1})f_k^2(u_k|v_k,u_{k-1},v_{k-1})du_{0:{M}},
\end{equation}
where $\mu^1$ and $\mu^2$ are the marginal and conditional distribution of the initial distribution $\mu$, respectively. 
For a general system satisfying the factorisation \eqref{decomp}, there is no analytic solution for the $M+1$-fold integration \eqref{partial_og}, i.e. we have an intractable likelihood problem. This section will introduce a class of problems for which SMC is particularly effective, show that \eqref{partial_og} {(and other obtained pseudolikelihoods for all considered observation regimes)} can be transformed into such a problem, recall the basic BPF as a tool for estimating $f(v_{0:M})$ and finally present the cSMC approach to more efficiently estimate this crucial quantity. Once this is done, the obtained pseudolikelihood can then be numerically maximised when performing MLE, or used to perform Bayesian inference, as described in Section~\ref{sec:bayesianinference}.
\subsection{Background}\label{sec:background}
Consider the path of a discrete time Markov process taking values in $\mathcal{X}$ over times $t_0,\ldots,t_M$ operating on $\mathcal{X}^{M+1}$ with a density defined as:
\begin{equation*}
    \mathbb{M}(x_{0:M})=M_0(x_0)\prod_{k=1}^M M_k(x_{k-1},x_k),
\end{equation*}
where each $M_k$ is a smooth transition density {from $x_{k-1}$ to $x_k$} at time $t_k$. Now, with a sequence of non-negative functions $\{G_k:\mathcal{X}\to[0,\infty)\}_{k=0}^M$, whose product is integrable with respect to $\mathbb{M}(x_{0:M})$, we define: 
\begin{equation}
\label{generaltargetmeasure}
Z=\int_{\mathcal{X}^{M+1}} G_0(x_0)M_0({x_0})\prod_{k=1}^M G_k(x_k)M_k(x_{k-1},x_k)dx_{0:M}. 
\end{equation}
The sequence of functions $\{G_k\}_{k=0}^M$ are typically called \textit{potential} functions. The pair $\{G_k, M_k\}_{k=0}^{M}$ will be termed a \textit{Feynman--Kac formulation} below. Through \eqref{generaltargetmeasure}, for each such pair a new process is defined on $\mathcal{X}$, with a normalised density $\mathbb{Q}(x_{0:M})=1/Z\prod_{k=0}^{M}G_k(x_k)\mathbb{M}(x_{0:M})$, which we refer to as the \textit{target} process with a normalising constant $Z$. We will ultimately represent the solution to problems like \eqref{partial_psedo_unbridged} in the form (13), so that $Z$, will be the likelihood of interest.

\subsection{Feynman--Kac formulation of marginal pseudolikelihood}
Let us go back to our initial problem of {computing the marginal} pseudolikelihood as formulated by \eqref{partial_psedo_unbridged}. Seeing $\{v_k\}_{k=0}^{M}$ as constants (hence omitting them from notation) and $\{u_k\}_{k=0}^{M}$ as the underlying Markov path process, we are interested in re-writing \eqref{partial_psedo_unbridged} as a target density in the form of \eqref{generaltargetmeasure}, whose normalising constant, $Z$, is the marginal pseudolikelihood. To this end, we have the following proposition, with proof reported in Supplementary Material \ref{Proof}.
\begin{proposition}\label{prop1}
The Feynman--Kac formulation, for $k=1, \ldots, M-1$:
\begin{align}  \label{prop1eq1}G_0(u_0)=&\mu^1(v_0)f_1^1(v_1|v_{0},u_{0}) & M_0(u_0)=&\mu^2(u_0|v_0)\\
    \label{prop1eq2}G_k(u_k)=&f_{k+1}^1(v_{k+1}|v_{k},u_{k}) & M_k(u_{k-1},u_k)=&f_k^2(u_k|v_k,u_{k-1},v_{k-1}),
\end{align}
has normalising constant (see~\eqref{generaltargetmeasure}) $Z=f(v_{0:M})${, the marginal pseudolikelihood}.
\end{proposition}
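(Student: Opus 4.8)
The plan is to verify the claimed identity by direct substitution: insert the Feynman--Kac pair \eqref{prop1eq1}--\eqref{prop1eq2} into the general normalising constant \eqref{generaltargetmeasure} (with the underlying path being $u_{0:M}$ and the fixed $v_k$ suppressed from the notation) and rearrange the resulting integrand until it coincides, factor for factor, with the integrand of the marginal pseudolikelihood \eqref{partial_psedo_unbridged}. Since both are integrals of the same order over the same domain $\mathcal{U}^{M+1}$, establishing equality of the integrands suffices.

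First I would expand $G_0(u_0)M_0(u_0)\prod_{k=1}^M G_k(u_k)M_k(u_{k-1},u_k)$ and sort the factors into two groups: those of type $f^1$, arising from the potentials $G_k$, and those of type $f^2$, arising from the transitions $M_k$. The $f^2$ group is the transparent one: $M_0=\mu^2(u_0\mid v_0)$ supplies the initial density and $M_k=f_k^2(u_k\mid v_k,u_{k-1},v_{k-1})$ supplies the $u$-transition for each $k$ up to $k=M$, so together they reproduce $\mu^2(u_0\mid v_0)\prod_{k=1}^M f_k^2(\cdot)$ verbatim.

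The one genuine subtlety lies in the potentials, where the index is shifted by one: $G_k$ carries the \emph{next} marginal factor $f_{k+1}^1(v_{k+1}\mid v_k,u_k)$. I would reindex $j=k+1$ in $\prod_{k=1}^{M-1}G_k$ to turn it into $\prod_{j=2}^{M}f_j^1(v_j\mid v_{j-1},u_{j-1})$, observe that the boundary potential $G_0$ contributes the missing $j=1$ term $f_1^1(v_1\mid v_0,u_0)$ together with the initial marginal $\mu^1(v_0)$, and note that the terminal potential is trivial, $G_M\equiv 1$, since the look-ahead exhausts all $M$ factors of type $f^1$ already at step $M-1$ (there is no observation $v_{M+1}$). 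Combining these pieces recovers $\mu^1(v_0)\prod_{k=1}^M f_k^1(v_k\mid v_{k-1},u_{k-1})$.

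Multiplying the two reassembled groups yields exactly the integrand of \eqref{partial_psedo_unbridged}, and integrating over $u_{0:M}$ gives $Z=f(v_{0:M})$. I expect the main (and essentially only) obstacle to be the careful bookkeeping of this one-step index shift together with the boundary terms --- ensuring that both $\mu^1(v_0)$ and $f_1^1$ are absorbed into $G_0$, that no $f^1$ factor is double-counted or dropped, and that the trivial terminal potential is correctly identified. This ``look-ahead'' encoding, in which the potential at time $k$ anticipates the marginal transition at time $k+1$, is the standard device for representing a smoothing-type marginal as a Feynman--Kac flow; beyond it the argument is a routine matching of factors with no analytic content.
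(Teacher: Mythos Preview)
Your proposal is correct and follows essentially the same approach as the paper: both arguments amount to matching the factors of \eqref{partial_psedo_unbridged} against $G_0M_0\prod G_kM_k$ via the one-step index shift you identify. The only cosmetic difference is in the handling of $u_M$: the paper applies Fubini to integrate out $u_M$ first (using $\int f_M^2(u_M\mid\cdot)\,du_M=1$), thereby reducing the domain to $\mathcal{U}^M$ and matching against a Feynman--Kac flow of length $M$ (indices $0$ to $M-1$, exactly as stated in the proposition), whereas you keep the domain as $\mathcal{U}^{M+1}$ and append a trivial terminal step $G_M\equiv 1$, $M_M=f_M^2$ --- which is equivalent but, strictly speaking, extends the formulation one step beyond what the proposition defines.
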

The {cSMC} method while also characterising the conditional distribution of the latent process, can be viewed as a mechanism for efficiently estimating $Z$, the likelihood. To make the algorithm digestible, we first introduce a class of equivalent models, in the sense that they have the same normalising constant and target density.
\subsection{Twisted models}
Consider a sequence {of strictly positive and bounded functions $\{\psi_k: \mathcal{X}\to (0,\infty)\}_{k=0}^M$.}
These functions are referred to as {\textit{policies}} in \cite{cSMC}, as they can be viewed as a control input which will be used to perturb the dynamics of an underlying Markov process with the particular goal of providing low variance estimates of a particular integral of the type \eqref{generaltargetmeasure}.  
We will adopt the following short hand: $
        M_k(\psi_k)(x_{k-1}):=\int_{\mathcal{X}}\psi_k(x_k)M_k(x_{k-1},x_k)dx_k$, for any $\psi_k : \mathcal{X} \to {(0,\infty)}$.
        
We will frequently use $[a:b] := [a,b] \cap \mathbb{N}_0$ to denote the discrete interval \textit{from} $a$ \textit{to} $b$. For each integral specified by the Feynman--Kac formulation $\{G_k,M_k\}_{k=0}^{M}$ , we define the following \textit{twisted} Feynman--Kac formulation which leaves the normalising constant $Z$ unchanged:
\begin{align}
    M_0^{\psi}(x_0)=&\frac{\psi_0(x_0)M_0(x_0)}{M_0(\psi_0)}&G^{\psi}_0(x_0)=&\frac{M_0(\psi_0)G_0(x_0)M_1(\psi_1)(x_0)}{\psi_0(x_0)}\\
    M_k^{\psi}(x_{k-1},x_k)=&\frac{\psi_k(x_k)M_k(x_{k-1},x_k)}{M_k(\psi_k)(x_{k-1})}&G_k^{\psi}(x_k)=&\frac{G_k(x_k)M_{k+1}(\psi_{k+1})(x_{k})}{\psi_k(x_k)}
\end{align}
where $k\in [1:M]$, $\psi_{M+1}\equiv 1$, and the twisted process path has density:
\begin{equation}
    \mathbb{M}^{\psi}(x_{0:M})=M_0^{\psi}(x_0)\prod_{k=1}^M M_k^{\psi}(x_{k-1},x_k).
\end{equation} That is, for the \textit{twisted} target process whose density $\mathbb{Q}^\psi$ {is} defined through 
$\frac{1}{Z^\psi}\prod_{k=1}^M G^{\psi}_k(x_k)\mathbb{M}^{\psi}(x_{0:M})$, we have $\mathbb{Q}^{\psi}(x_{0:M})=\mathbb{Q}(x_{0:M})=\frac{1}{Z}\prod_{k=1}^M G(x_k)\mathbb{M}(x_{0:M})$ and both densities have the same normalising constant, i.e. $Z^\psi=Z$ (see, e.g., \citealt[Proposition 1]{iapf}). The intuition motivating cSMC is that by pulling the density of the twisted path process $\mathbb{M}^{\psi}(x_{0:M})$ closer to the density of the target process $\mathbb{Q}(x_{0:M})$ via choosing an appropriate sequence of policies $\psi_k$, and adapting the \textit{twisted} potentials $\{G^{\psi}_k\}_{k=0}^M$ accordingly, the density of the \textit{twisted} target process $\mathbb{Q}^\psi$ is left unchanged, but the variance of the estimate of $Z$ will be reduced. The estimate is unbiased \cite[Proposition 7.4.1]{delmoral}, so smaller variance implies smaller mean-squared error.
\subsection{A simple particle filter for twisted models}
Particle filters are conventionally viewed as the mean-field interacting particle approximation of the target process whose density is $\mathbb{Q}(x_{0:M})$; we, however, mainly view particle filters as an apparatus to provide unbiased estimates of its normalising constant $Z$. Resampling is a key step in SMC algorithms in which one draws from a weighted sample to obtain an unweighted sample suitable for approximating the same distribution. The particle filter in Algorithm~\ref{alg:pf_simple} details the simple multinomial scheme which is used in our empirical study below, but any unbiased scheme (in the sense of \citealt{Pmcmc}) would work, and lower variance schemes would be expected to lead to better empirical performance. As resampling is relatively rare in cSMC, we do not dwell on this point here.

Applying Algorithm~\ref{alg:pf_simple} to the Feynman--Kac Formulation derived in Proposition~\ref{prop1} with constant policies $\psi_{0:M}\equiv 1$ yields the BPF estimate of the partial pseudolikelihood. Such estimates will have a variance that can be improved by more careful selection of the policies{, as discussed in  Section \ref{policyrefinement}. Before doing that though, in the next Section, we recall the possibility of an optimal policy.} 

\begin{algorithm}[t]
\caption{Simple particle filter for the twisted model}\label{alg:pf_simple}
\textbf{Input}: \# particles N, formulation $\{G_k,M_k\}_{k=0}^M$, policy $\psi_{0:M}$.
\begin{algorithmic}

\State \textbf{for} {$n\in[1:N]$}: sample $X_0^n\sim M_0^\psi(x_0)$ and compute  weights $w_0^n=G^\psi_0(X_0^n)$

\State compute normalised weights $W_0^n=\frac{w_0^n}{\sum_{n'=1}^N w_0^{n'}}, n \in [1:N]$ and $l_0=\frac{1}{N}\sum_{n=1}^N w_0^n$
\State \textbf{for} {$k\in[1:M]$} \textbf{do}:
\State $\qquad$ \textbf{if} $\frac{1}{\sum_{n=1}^{N}(W^n)^2} \le \frac{N}{2}$ resample: $A_k \sim \mathsf{Cat}(W_k^{n-1})^{\otimes N}$; set $\widehat{W}_{k-1}^n = (\frac{1}{N},\ldots,\frac{1}{N})$
\State $\qquad$ \textbf{else} $A_{k} = [1:N]$ and $\widehat{W}_{k-1}^n = W_{k-1}^n$
\State $\qquad$ \textbf{for} {$n\in[1:N]$}: sample  $X_k^n\sim M_k^\psi(X_{k-1}^{A_k^n},x_k)$ and set $w_k^n= G_k^\psi(X_{k-1}^{A_k^n},X_k^n)$
\State $\qquad$ compute $W_k^n=\frac{\widehat{W}_{k-1}^n w_k^n}{\sum_{n'=1}^N \widehat{W}_k^{n'-1} w_k^{n'}}, n \in [1:N]$ and $l_k^N= \sum_{n=1}^N \widehat{W}_{k-1}^n w_k^n$
\State \textbf{end for}
\end{algorithmic}
\textbf{Output}: Likelihood estimate $L_N=\prod_{k=0}^M l_k^N$ , particles {$X_{0:M}^{1:N}$}
\end{algorithm}
\subsection{Optimal policy and policy refinement}\label{policyrefinement}
The existence of an optimal policy, under which the particle filter could estimate the normalising constant $Z$ exactly, is well known. 
In particular, given a pair $\{G_k,M_k\}_{k=0}^M$ specifying the target process with density $\mathbb{Q}$ and a normalising constant $Z$, the sequence of optimal policies $\{\psi_k^*\}_{k=0}^M$ satisfies the backward recursion $\psi^*_M(x_M) = G_M(x_M),  \psi^*_k(x_k)=G_{k}(x_k)M_{k+1}(\psi^*_{k+1})(x_k)$ for $k\in[0:M-1]$. The resulting twisted Feynman--Kac formulation $\{G^{\psi^*}_k,M^{\psi^*}_k \}$, when processed with Algorithm~\ref{alg:pf_simple}, gives the estimator $L_N = Z$ almost surely for all $N\in\mathbb{N}$, see e.g. \citet[Proposition 2]{iapf}.

The optimal policy is intractable in general{, but its recursive equation could still be used to refine the policies $\psi_k$.} 
For easy tractability, following \cite{iapf} and \cite{cSMC}, as we are interested in conditionally Gaussian transition kernels $M_k$, we constrain ourselves to
a family of log-policies, that we denote as $\phi_k=\log(\psi_k)$, which when used to twist a Gaussian transition kernel induce a new Gaussian transition kernel:
\begin{equation}\label{f^eq}
    \mathcal{F}_k^{eq}=\{\phi_k(x_k)=x_k^{T}A_kx_k+x_k^Tb_k+c_k:(A_k,b_k,c_k)\in\mathbb{S}_d\times\mathbb{R}^d\times\mathbb{R}\},
\end{equation}
where $d$ is the dimension of the particles and $\mathbb{S}_d$ is the cone of symmetric positive semi-definite matrices of dimension $d\times d$.

As a conjugacy property that is frequently seen in Bayesian computation, the above constraints on $\{\psi_k\}_{k=0}^{M}$ and $\{M_k\}_{k=0}^{M}$ 
conveniently leads to an explicit expression for computing $M_k(\psi_k)(x_{k-1})$-like integrals; sampling $x_k$ at time $t_k$ from $M_k^{\psi}(x_{k-1},x_k)\propto{\psi_k(x_k)M_k(x_{k-1},x_k)}$ involves only sampling from another Gaussian distribution whose mean and variance are easily computed.

To learn the optimal policies in an iterative way, \cite{cSMC} appealed to an iterative policy refinement process known as approximate dynamic programming (ADP), and learnt the optimal policy by iteratively learning incremental refinements of the current policy. The iterated auxiliary particle filter (iAPF) by \cite{iapf}, employed for its numerical examples a more na{\"\i}ve direct numerical optimisation of the policy. These two fitting methods constrained to the family of quadratic log-policies produces very similar results, but are different schematically. We adopt a hybrid form of the two, in the sense that we use ADP to learn the log-policy, but do it by directly learning a policy at each iteration rather than incrementally refining it, as summarised in Algorithm~\ref{alg::adp}.

\begin{algorithm}[t]
\caption{Linear Regression Approximation of the Optimal Policy}\label{alg::adp}
\textbf{Input}: Particles from Algorithm~\ref{alg:pf_simple} $X_{0:M}^{1:N}$, Feynman--Kac formulation $\{G_k,M_k\}$
\begin{algorithmic}
  \For{$k \in [M:0]$}
     \State \textbf{for} $n \in [1:N]$: compute $\xi_k^n=\log(G_k(X^n_k)) + \mathbb{I}_{k < M} \log( M_{k+1}(\hat{\psi}_{k+1})(X_k^{n}))$.
     \State \begin{equation}\label{adp_fitting}
       \textrm{compute }
\hat{\phi}_k=\underset{\phi_k\in\mathcal{F}^{eq}_k}{\text{argmin }} 
 \sum_{n=1}^N (\phi(X_k^n)-\xi_k^n)^2\quad, \textrm{and }\hat{\psi}_k=\exp(\hat{\phi}_k)
     \end{equation}
     \EndFor
\end{algorithmic}
\textbf{Output}: the fitted policy $\hat{\psi}_{0:M}$ as an approximation for the optimal policy 
\end{algorithm}
Under the prespecified constraints, equation \eqref{adp_fitting} is{ a} least squares problem with a total of $(1+d)(d/2+1)$ coefficients. Running Algorithm~\ref{alg:pf_simple} on the resulting {fitted} one-step twisted Feynman--Kac formulation $\{G^{\hat{\psi}}_k,M^{\hat{\psi}}_k\}_{k=0}^{M}$ delivers an unbiased estimate of the normalising constant as the twisted model \citep{iapf}. If the optimal policy has been approximated well, these estimates will have much smaller variance than those provided by a BPF. It is reasonable to suppose that this will be the case in problems of interest, noting that, under regularity conditions, the fitted policies from Algorithm~\ref{alg::adp}, iteratively, converge to a \textit{good} policy, whose characterisation can be found in \citet[Section 5.2, Theorem 1]{cSMC}. 

 In our implementation, we will then simply supply the resulting particles from Algorithm~\ref{alg:pf_simple}, say $X_{0:M}^{1:N}$, again into the recursive function fitting procedure {from} Algorithm~\ref{alg::adp}, {with the same Feynman--Kac formulation as initial input.} Figure~\ref{fig:csmcflow} summarises our implementation of the cSMC algorithm. Supplying the Feynman--Kac formulation derived in Proposition~\ref{prop1} as the functional input into the cSMC, will allow us to solve the intractable likelihood problem (12), estimating the desired marginal pseudolikelihood. The proposed cSMC approach will be applied to all considered observation regimes, with corresponding Feynman--Kac formulations derived in the next Section. Note the final run of Algorithm~\ref{alg:pf_simple} upon termination, to retain unbiasedness for the estimate. 
 \begin{figure}
     \centering     \includegraphics[width=.72\textwidth]{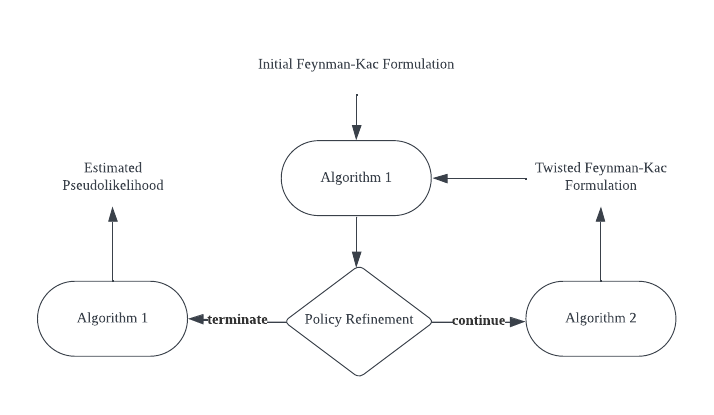}
\caption{(Simplified) Flowchart Diagram of cSMC.}
     \label{fig:csmcflow}
 \end{figure}
\section{Methodology}\label{section4}
Before providing our proposed methodology, we briefly summarise relevant earlier literature: \cite{pedersen1995} gave theoretical guarantees for the use of pseudolikelihoods in MLE-type inferences for a very general class of SDEs, if one is able to device a method (in his case, based on bridges, like \eqref{bridge_kran}) to asymptotically correct the bias arising from the time-discretisation; \cite{Durham} later on improved upon the Pedersen's method \eqref{bridge_kran} by designing a manually twisted diffusion bridge called the modified Brownian bridge, and used importance sampling to prove in concept that it is indeed possible to estimate and maximise the ``corrected'' likelihood, calling for advanced computational approaches and more intricately designed bridges, to reduce the noise coming from such estimation and hence ease the maximisation process. The cSMC approach gives an automatic way to guide the forward proposals within diffusion bridges by starting from the ``blind'' forward proposals of Pedersen's method \eqref{bridge_kran}, sparing the manual efforts to design sufficiently guided forward proposals. For further work, see \cite{Whitaker2017,malorythesis}, which facilitates variance-reduction within importance samplers. \citet[Part IV]{pierthesis} followed this path and appealed to the then novel iAPF {(\citealt{iapf})} to estimate the EuM implied pseudolikelihood for several simple examples. This showed great promise, as the iAPF was indeed able to infer  the pseudolikelihoods in the fully observed with no additive noise case,  with clear advantages in terms of the variance of the likelihood estimates compared with the more established approach of \cite{Durham}. Here, by combining these ideas with more advanced numerical and computational techniques, to improve its efficiency in a non-trivial way, we show that this approach extends much further, across different observation regimes for a wide class of SDEs. 

In all considered settings, we will employ the following time scale and indexing to keep notation simple (although the approach can be applied directly to any time scale with trivial modifications). We introduce equidistant time points $\{t_k = k \Delta\}_{k=0}^M$, with $t_M= M \Delta = T$. Where augmentation is involved, we divide each interval $[t_{k-1},t_k]$, into subintervals. To this end, we introduce for each $k$, the equidistant time points $\{t_{k;n}\}_{n=0}^K$ such that $t_{k-1}=t_{k;0}<t_{k;1}<\ldots<t_{k;K}=t_k$, each being $\delta=\Delta/K$ apart. We will use $k;0:K$, to index the collection of ordered time points, e.g. $t_{k;0:K}$. 
\subsection{Full observation}\label{section:Full_Observation}
\noindent{\bf Baseline result: Unbridged pseudolikelihood} 
In the fully observed case, for a numerical scheme with explicit transition density, the pseudolikelihood can be computed explicitly:
\begin{equation}\label{full}
    f(x_{0:M})=\mu (x_0)
\prod_{k=1}^M f_k\left(
    x_k
\mid
    x_{k-1}
\right),
\end{equation}
where $f_k$ is the implied pseudo-transition density at time $t_k$, $\mu$ some initial distribution, and $f$ the \textit{pseudolikelihood}. Parameter dependence is suppressed in the notation for the likelihood for parsimony. In a high-frequency regime with a convergent numerical scheme, this likelihood will be able to drive ``almost exact'' inference, see \citet[Section 6.5]{pilipovic2022parameter} for a simulation study on the effect of time scale on the quality of likelihood-driven inference. Even in low-frequency regimes, we can obtain results with good theoretical properties asymptotically by computationally reducing the bias via diffusion bridging{, as discussed below}. 

\noindent{\bf Exploratory steps: Bridged full pseudolikelihood}
The logic behind diffusion bridges is straightforward: For each of the proxy transition densities $f_k$ that has a time step, say $\Delta$,  we replace it with a version that is obtained from augmenting with $K$ unobserved intermediate time points which are then integrated out: 
\begin{equation}\label{bridge_kran}
    {f}_{k}^{[K]}(x_k|x_{k-1})=\int_{\mathcal{X}^{K-1}}\prod_{n=1}^{K} f_{k;{n}}(x_{k;{n}}\mid x_{k;{n-1}}) dx_{k;{1:K-1}},\qquad k\in[1:M].
\end{equation} 
The bridged pseudolikelihood is then:
\begin{equation}
\label{fullbridge}
    {f}^{[K]}(x_{0:M})=\mu(x_0)  \prod_{k=1}^M {f}_{k}^{[K]}(x_k|x_{k-1}).
\end{equation}
If the integrals in \eqref{bridge_kran} can be evaluated exactly, which is rarely the case, the more accurate bridged pseudolikelihood \eqref{fullbridge} will enjoy a discretisation error on the size of $\delta$ instead of $\Delta$. Henceforth, we will assume that the following strong convergence of the bridged transition density is met:
\begin{assumption}\label{assumption_conv_likelihood}
Let $\tilde{X}_k$ and $X_k$ denote the discretised and continuous model at time $t_k$, respectively, and $f^*(x_k|x_{k-1})$ denote the true transition density of the continuous model. We assume the following $L_1$ convergence to the true transition density holds for the bridged {transition density $ f_k^{[K]}$ in \eqref{bridge_kran}} for all $k\ge0$ and $x_{k-1}\in\mathcal{X}$:  
$
        \mathbb{E}[|{f}_{k}^{[K]}(\tilde{X}_k|x_{k-1})- f^*(X_k|x_{k-1})|]\to0$ as $K\to\infty.$
\end{assumption}
In particular, the pointwise convergence in probability of the (log-)likelihood can be guaranteed:  $\log(  {f}^{[K]}(\tilde{X}_{0:M}))\to \log(f^{*}(X_{0:M}))$ as $K\to \infty$, with $f^*(X_{0:M})$ being the likelihood of the continuous model. \citet[Theorem 2]{pedersen1995} established consistency and asymptotic normality of the MLE 
based on this bridged likelihood \eqref{fullbridge} under regularity conditions that covers our considered class of SDEs \eqref{classofsde}, and Assumption~\ref{assumption_conv_likelihood}. In general, \eqref{bridge_kran} is analytically intractable, but the following result, whose proof follows by factorising $Z=f^{[K]}(x_k|x_{k-1})$ into the integral form \eqref{generaltargetmeasure} specified by pairs of $\{G_{k;n},M_{k;n}\}_{n=1}^{K-1}$, allows for unbiased cSMC approximations.
\begin{proposition}\label{propbridge_1}
The Feynman--Kac segment, defined for $k\in [1:M]$:
\begin{eqnarray*}
        &&G_{k;n}(x_{k;n})\equiv1, \quad n\in [1:K-2], \qquad\qquad
        G_{k;K-1}(x_{k;{K-1}})= f_{k;{K}}(x_{k}\mid x_{k;{K-1}})\\
        &&M_k(x_{k;{n-1}},x_{k;n})=f_{k;{n}}(x_{k;{n}}\mid x_{k;{n-1}}), \quad n\in [1:K-1]
    \end{eqnarray*}
has normalising constant $f_k^{[K]}(x_k|x_{k-1})$, the bridged full transition density, with given $x_{k;0}=x_{k-1}$ and $x_{k;K}=x_k$. Upon concatenating such M sugments with known initialisation $\mu_0$, the normalising constant of the resulting Feynman--Kac formulation is the bridged full pseudolikelihood $f^{[K]}(x_{0:M})$.\end{proposition}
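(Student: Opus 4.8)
The plan is to verify the claim by direct substitution into the defining integral \eqref{generaltargetmeasure}, treating first a single time-step segment with pinned endpoints and then the concatenation of $M$ such segments. Throughout, the endpoints $x_{k;0}=x_{k-1}$ and $x_{k;K}=x_k$ are fixed data, so each segment's integral runs only over its $K-1$ interior points $x_{k;1:K-1}$ and the associated Feynman--Kac process starts deterministically at $x_{k-1}$ rather than from a genuine initial density.

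First I would establish the segment statement. Substituting $M_k(x_{k;n-1},x_{k;n})=f_{k;n}(x_{k;n}\mid x_{k;n-1})$ and $G_{k;n}\equiv 1$ for $n\in[1:K-2]$ into the product $\prod_{n=1}^{K-1} G_{k;n}(x_{k;n})\,M_k(x_{k;n-1},x_{k;n})$, all potentials except the last drop out, leaving $\prod_{n=1}^{K-1} f_{k;n}(x_{k;n}\mid x_{k;n-1})$. The single non-trivial potential $G_{k;K-1}(x_{k;K-1})=f_{k;K}(x_k\mid x_{k;K-1})$ then supplies precisely the missing final factor $f_{k;K}(x_{k;K}\mid x_{k;K-1})$, upon using $x_{k;K}=x_k$. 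Hence the integrand coincides term-by-term with that of \eqref{bridge_kran}, and integrating out $x_{k;1:K-1}$ identifies the segment's normalising constant with $f_k^{[K]}(x_k\mid x_{k-1})$.

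Finally I would concatenate the $M$ segments into one Feynman--Kac formulation whose latent path is the ordered list of all interior points $(x_{1;1:K-1},\ldots,x_{M;1:K-1})$, with the observed $x_{0:M}$ entering only as fixed pinning points and the scalar $\mu(x_0)$ absorbed into the initial potential $G_0$. The decisive observation is that, in the fully observed regime, every $x_k$ is data, so each segment's integrand depends only on its own interior variables together with the fixed endpoints $x_{k-1},x_k$; by Fubini the joint $M(K-1)$-fold integral therefore factorises across segments and, by the segment result, equals $\mu(x_0)\prod_{k=1}^M f_k^{[K]}(x_k\mid x_{k-1})=f^{[K]}(x_{0:M})$, as in \eqref{fullbridge}. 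The only genuine care is the boundary bookkeeping: the pinning potential $G_{k;K-1}$ scores the transition into the fixed point $x_k$ while the next segment begins deterministically from $x_{k;K}=x_k=x_{k+1;0}$, so no variable is integrated at a boundary and adjacent segments couple solely through data. Making this index relabelling explicit and invoking Fubini (justified by the integrability of the transition densities already presumed in \eqref{bridge_kran}) is the main but essentially routine obstacle; once it is in place, the identification $Z=f^{[K]}(x_{0:M})$ is immediate.
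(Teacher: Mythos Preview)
Your proposal is correct and follows essentially the same approach the paper indicates: the paper states that the proof ``follows by factorising $Z=f^{[K]}(x_k|x_{k-1})$ into the integral form \eqref{generaltargetmeasure} specified by pairs of $\{G_{k;n},M_{k;n}\}_{n=1}^{K-1}$,'' which is precisely the direct-substitution verification you carry out, and your concatenation argument via Fubini and the product structure of \eqref{fullbridge} is the natural completion that the paper leaves implicit.
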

This construct was referred to as the Pedersen's method by \cite{Durham}. The intermediate sample paths are weighted by the conditional distribution of the $next$ observation given the $last$ in-between simulations, i.e. $f_{k;{K}}(x_{k}\mid x_{k;{K-1}})$. As demonstrated by \citet[Part IV]{pierthesis}, this formulation, when employed within the iAPF with the EuM scheme, 
{yields} a clear variance reduction effect {for estimating the bridged pseudolikelihood} when compared with other methods.   
\subsection{Bridged partial pseudolikelihood}
The partial observation case is more complicated than the fully observed case. The introduction of bridges adds one more layer of complexity. We inherit the assumptions and notation from Proposition~\ref{prop1}, but switch to the finer time scale $\{t_{k;n}\}_{n=0}^K$, with {time step} $\delta$. Consider the following integral:
\begin{align*}
{f}^{[K]}(v_{0:M})=&\int_{\mathcal{U}^{M+1}}\int_{\mathcal{X}^{{M}\times{K}}}\mu^1(v_{0}) \mu^2(u_{0}|v_{0})\prod_{k=1}^{M} f_{k;K}^1(v_{k;K}|x_{{k};{K-1}})f_{k;K}^2(u_{k;K}|v_{k;K},x_{{k};{K-1}}) \\& \times\prod_{n=1}^{K-1} f_{k:n}(x_{k:n}|x_{k:{n-1}})du_0dx_{1;1:{K-1}}du_{2;0}dx_{{2;1:{K-1}}}\ldots du_{M;K},
\end{align*} where $x_{1;0}=(v_{0},u_{0})$. This suggests the following sampling procedure:
\begin{compactenum}
    \item Begin by concatenating a sample from the conditional initial distribution $\mu^2(u_0|v_0)$ with $v_0$ to obtain $x_0=(v_0,u_0)=x_{1;0}$; each sample is weighted equally by $\mu^1(v_0)$.
    \item For each inter-observation interval $[t_{k-1},t_k]$, where a sample $x_{k;0}$ was available at time $t_{k-1}$, we extend the sample to time $t_{k;{K-1}}$, using the transition densities $\{f_{k;n}\}_{n=1}^{K-1}$, generating the ensemble $x_{k;1:K-1}$. The trajectory $x_{k;{0:K-1}}$ is then weighted by the marginal transition density $f_{k;K}^1(v_{k;K}|x_{{k};{K-1}})$.
    \item Finally, transit into the next time interval $[t_{k},t_{k+1}]$. At time $t_{k,K}$, we propose the partial latent state $u_{k;K}$ from $x_{k;{K-1}}$ by concatenating a sample from \mbox{$f_{k;K}^2(u_{k;K}|v_{k;K},x_{{k};{K-1}})$ with the known $v_{k;K}$ to obtain $x_{k+1;0}=x_{k;K}=(v_{k;K},u_{k;K})$}. 
\end{compactenum}
We have the following proposition, whose proof follows by factorising into the form of \eqref{generaltargetmeasure}, with the specified pairs as detailed below. 
\begin{proposition}
\label{propbridge_2}
The Feynman--Kac formulation, initialised at \\
\mbox{$M_0(u_0)= \mu^2(u_{0}|v_{0}),  G_0(u_0)= \mu^1(v_{0})$}, and defined, for $k\in[1:M]$:\\
\begin{minipage}{0.48\textwidth}     \begin{align*}
  \hspace{-5mm} M_{k;1}(x_{k;1}|u_{k-1;K})&=  f_{k;1}(x_{k;1}|v_{k-1;K},u_{k-1;K})\\
   \hspace{-5mm}          M_{k;n}(x_{k;n}|x_{k;{n-1}})&=f_{k;n}(x_{k;n}|x_{k;{n-1}}), 1<n<K\\
  \hspace{-5mm}           M_{k;K}(u_{k;K}|x_{k;{K-1}})&=f_{k;K}^2(u_{k;K}|v_{k;K},x_{{k};{K-1}})
        \end{align*}
\end{minipage}
\begin{minipage}{0.48\textwidth}
  \begin{align*}
           G_{k;n}(x_{k;n})&=1, n < K-1\\
         G_{k;K-1}(x_{k;K-1})&=f_{k;K}^1(v_{k;K}|x_{{k};{K-1}})\\
           G_{k;K}(u_{k;K})&=1
        \end{align*}
\end{minipage}
has normalising constant $f^{[K]}(v_{0:M})$, the $K$-step bridged partial pseudolikelihood.
\end{proposition}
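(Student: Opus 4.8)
The plan is to substitute the specified potentials and kernels directly into the general normalising-constant expression~\eqref{generaltargetmeasure} and verify term-by-term that the resulting iterated integral coincides with the defining integral for $f^{[K]}(v_{0:M})$, exactly as in the proofs of Propositions~\ref{prop1} and~\ref{propbridge_1}. The only genuinely new ingredient is careful bookkeeping of the two coordinates $x_{k;n}=(v_{k;n},u_{k;n})$ across the hand-off between consecutive coarse intervals.

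First I would index the full Markov chain by the ordered fine-grid times $t_{k;n}$ and write
\[
Z = \int G_0(u_0)\,M_0(u_0) \prod_{k=1}^M \Big( \prod_{n=1}^{K} M_{k;n}\,G_{k;n} \Big) \, d(\text{latent states}),
\]
being explicit that at each coarse node $t_k=t_{k;K}=t_{k+1;0}$ the observed coordinate $v_{k;K}=v_k$ is held fixed as data and is therefore \emph{not} an integration variable, whereas the latent coordinate $u_{k;K}$, together with every fully latent intermediate state $x_{k;n}$ for $1\le n\le K-1$, \emph{is} integrated out. This is precisely the integration domain $du_0\,dx_{1;1:K-1}\,du_{2;0}\cdots du_{M;K}$ appearing in the displayed formula for $f^{[K]}(v_{0:M})$.

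Second, I would collapse the hand-off potential--kernel pair at the end of each block. Using the factorisation~\eqref{decomp}, the product
\[
G_{k;K-1}(x_{k;K-1})\,M_{k;K}(u_{k;K}\mid x_{k;K-1}) = f_{k;K}^1(v_{k;K}\mid x_{k;K-1})\,f_{k;K}^2(u_{k;K}\mid v_{k;K},x_{k;K-1})
\]
recombines into the joint fine-step transition $f_{k;K}(x_{k;K}\mid x_{k;K-1})$ with $v_{k;K}$ fixed at its observed value. Together with the purely latent transitions $\prod_{n=1}^{K-1} M_{k;n}=\prod_{n=1}^{K-1} f_{k;n}(x_{k;n}\mid x_{k;n-1})$ and the trivial potentials $G_{k;n}\equiv 1$ for $n\ne K-1$, each coarse block reproduces exactly $\prod_{n=1}^{K} f_{k;n}(x_{k;n}\mid x_{k;n-1})$. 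Matching the initialisation $G_0(u_0)M_0(u_0)=\mu^1(v_0)\mu^2(u_0\mid v_0)$ with the leading factor then closes the argument.

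The main obstacle is notational rather than analytical: one must keep the alignment between the look-ahead potential $G_{k;K-1}$, which already carries the marginal weight $f^1$ of the observed coordinate at the \emph{next} coarse time, and the latent-only proposal kernel $M_{k;K}$, which inserts the observed $v_{k;K}$ as data and proposes only $u_{k;K}$. Getting this off-by-one alignment right across all $M$ blocks, and confirming that the observed coordinates never appear as integration variables, is exactly what distinguishes this partial-observation bridge from the simpler full-observation construct of Proposition~\ref{propbridge_1}.
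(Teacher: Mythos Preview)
Your proposal is correct and matches the paper's own approach: the paper states that the proof ``follows by factorising into the form of~\eqref{generaltargetmeasure}, with the specified pairs as detailed below,'' and does not spell out further details. Your block-by-block recombination via~\eqref{decomp}, together with the careful tracking of which coordinates are integrated and which are held fixed as data, is exactly that factorisation carried out explicitly, in the same spirit as the written proof of Proposition~\ref{prop1}.
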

\subsection{Splitting-based pseudolikelihoods}\label{l_assembly}
At last, {using splitting schemes,} we arrive at the explicit {approximating transition densities} required to make inference {via the cSMC algorithm}. We make clear that all the previous constructs are applicable for any numerical schemes fulfilling the required assumptions. For illustration, we dedicate this section to deriving identities, which are used to obtain the numerical results that are contained in this work. For an SDE within the class of \eqref{classofsde}, satisfying Assumption~\ref{assumption1}, we split \eqref{classofsde} into subequations \eqref{subeq1} and \eqref{subeq2},
and combine their solutions with the Lie--Trotter \eqref{LT} and Strang \eqref{S} compositions, leading to the explicit transition densities derived here.

\noindent{\bf Lie--Trotter-implied pseudolikelihoods}
Firstly, we introduce the identity {entering into Proposition~\ref{propbridge_1}} for computing the {Lie--Trotter-implied bridged full pseudolikelihood. Given the observations $\{x_k\}_{k=0}^{M}$, and inter-observation times $\{t_{k;n}\}_{n=0}^K$ being $\delta$ apart,  from \eqref{LT} we obtain $ f_{k;{n}}(x_{k;{n}}\mid x_{k;{n-1}})=\mathcal{N}(x_{k;{n}}; e^{A\delta}\Gamma_{\delta}(x_{k;{n-1}}), C(\delta))$, where $\mathcal{N}(x;a,b)$ denotes the density of a multivariate normal distribution with mean $a$ and covariance $b$ evaluated at $x$.} 
On the next level of complexity, we appeal to Proposition~\ref{prop1} and derive the Lie--Trotter-implied partial pseudolikelihood. For a process $x_k=(v_k,u_k)^\intercal$, where $v_k$ is the observed component with an observation interval $\Delta$, we make our notation more concise by defining
\begin{align}
\label{abbrev1}
    \mu_k= \begin{pmatrix}
    \mu_k^1\\\mu_k^2
\end{pmatrix}
:=&
e^{A\Delta}\Gamma_{\Delta}     
    \begin{pmatrix}
    v_{k-1}\\u_{k-1}
\end{pmatrix}\textrm{,\quad and\quad} &
\Sigma_k=
\begin{pmatrix}
        \Sigma_k^{1,1} & \Sigma_k^{1,2}\\
        \Sigma_k^{2,1} & \Sigma_k^{2,2}
\end{pmatrix}
:=& C(\Delta).
\end{align}
We have for each observation time $k$:
\begin{align*}
f_k^1(v_k|v_{k-1},u_{k-1})=&\mathcal{N}(v_k;\mu^1_k,\Sigma_k^{1,1})\\
f_k^2(u_k|v_k,v_{k-1},u_{k-1})=&\mathcal{N}\left(u_k;\mu_k^2+\Sigma_k^{2,1}(\Sigma_k^{2,2})^{-1}(v_k-\mu_k^1),\Sigma_k^{2,2}-\Sigma_k^{2,1}(\Sigma_k^{1,1})^{-1}\Sigma_k^{1,2}\right).
\end{align*}
Finally, we appeal to Proposition~\ref{propbridge_2} and derive the identities that give rise to the Lie--Trotter-implied bridged partial pseudolikelihood. We inherit the notation of \eqref{abbrev1} but with the finer time scale of $\{t_{k;n}\}_{n=0}^K$, that being $\delta$:
\begin{align*}
    f_{k;n}(x_{k;n}|x_{k;{n-1}})=&\mathcal{N}(x_{k;n}; e^{A\delta}\Gamma_{\delta}(x_{k;n-1}), C(\delta))\\
    f_{k;1}(x_{k;1}|v_{k-1;K},u_{k-1;K})=&\mathcal{N}(x_{k;1}; e^{A\delta}\Gamma_{\delta}((v_{k-1;K},u_{k-1;K})^\intercal), C(\delta))
\\
f_{k;K}^1(v_{k;K}|x_{{k};{K-1}})=&f_{k;K}^1(v_{k;K}|v_{k;K-1},u_{k;K-1})
    =\mathcal{N}(v_{k;K},\mu^1_{k;K-1},\Sigma_{k;K-1}^{1,1})\\
    f_{k;K}^2(u_{k;K}|v_{k;K},x_{{k};{K-1}})=&\mathcal{N}(u_{k;K};\mu_{{k};{K-1}}^2+\Sigma_{{k};{K-1}}^{2,1}(\Sigma_{{k};{K-1}}^{2,2})^{-1}(v_{t;M}-\mu_{{k};{K-1}}^1),\\&\phantom{\mathcal{N}(u_{k;K};}\Sigma_{{k};{K-1}}^{2,2}-\Sigma_{{k};{K-1}}^{2,1}(\Sigma_{{k};{K-1}}^{1,1})^{-1}\Sigma_{{k};{K-1}}^{1,2}).
\end{align*}
\noindent{\bf Strang-implied pseudolikelihoods}\label{non-invertible}
The Strang splitting scheme, {closely} related to the Lie--Trotter scheme, provides better one-step approximation but requires the well-definedness of the inverse of $\Gamma_\Delta${, the solution of the ODE \eqref{subeq2},} for the scheme to admit explicit transition densities, otherwise one should look for likelihood-free approaches, see e.g. \cite{SAMSON2025108095}. We require the explicitness of the Strang transition density, and hence impose the well-definedness of the inverse of $\Gamma_\Delta$:\begin{assumption}
\label{assump2}
The solution $\Gamma_\Delta$ of the ODE \eqref{subeq2} {admits an inverse function} $\Gamma_\Delta^{-1}${ which} exists everywhere on $\mathbb{R}^d$ {for $\Delta$ small enough}.
\end{assumption}
\noindent This assumption asserts that for a fine enough time-discretisation, the Strang scheme remains explicit, so that the half-step inverse of $X_{k}^{\textrm{S}}$ is a Lie--Trotter process, see \eqref{eqcond}.

Inferring the Strang-implied pseudolikelihoods is a two-step process: i. obtain the Lie--Trotter-implied pseudolikelihood on the inverse process $Z_k=\Gamma_\frac{\Delta}{2}^{-1}(X_k^{\textrm{S}})$; ii. straightforwardly address the change of variables, e.g. for every $f_k(x_k|x_{k-1})$ in \eqref{full}, 
we first estimate ${f}_{k}(z_k|z_{k-1})$, then compute ${f}_{k}(x_k|x_{k-1})={f}_{k}(z_k|z_{k-1})\left|\det{\left(D\Gamma_\frac{\Delta}{2}(\Gamma_\frac{\Delta}{2}^{-1}(x_k))\right)}\right|^{-1}$, where $D\Gamma$ denote the \textit{Jacobian matrix} of $\Gamma$ and $\det(\Gamma)$ its determinant. Note that the corresponding densities for the bridged case can be immediately recovered replacing $\Delta$ with $\delta$. In the partially observed case, we will require component-wise invertibility, in the sense that no unobserved coordinate enters the observed coordinate in the ODE subequation~\eqref{subeq2}, so that no unknown information is required to compute the transformation of variables.

The problem remains that, for datasets where the size of the time step is given, $\Gamma_\Delta$ might not be invertible for the {observation} time step $\Delta$. Again, bridging comes into play: in such applications, we will choose a bridge number $K$, such that, 
$\delta=\Delta/K$
{yields a discretisation fine enough}
for $\Gamma_\delta$ to be invertible. 

\subsection{Inference with cSMC-estimated pseudolikelihoods}\label{sec:bayesianinference}
Applying cSMC to the Feynman--Kac flows given in the previous Section gives estimates of the pseudolikelihoods that typically enjoy very low variances {compared to, e.g., the BPF}. These likelihood estimates can be used within any likelihood-based inferential framework which makes use of unbiased estimates of the likelihood. We describe here two particular approaches that we have found effective and which are employed in our numerical study.  
To obtain point estimates via a maximum (pseudo)likelihood approach, we will appeal to a noisy optimisation technique known as the simultaneous perturbation stochastic approximation (SPSA) method. From a Bayesian perspective, we will adopt a particle marginal Metropolis--Hastings (PMMH) approach {(see, \citealt{Pmcmc})} to approximate the posterior.

\noindent{\bf Point estimation: Simultaneous perturbation stochastic approximation}
The SPSA method, proposed by \citet{spsa}, is a recursive method which simultaneously updates all coordinates, leading to a significant improvement in efficiency compared with other stochastic approximation
methods that typically explore the impact of perturbations in one (or a subset of all) dimension(s) at a time. Here, we have adopted adaptive SPSA \citep{adaptive_spsa}, to further improve its efficiency. 
We refer to Supplementary Material \ref{SPSAsupp} for more details on this algorithm, its implementation in our setting and the proposed initialisation procedure.

\noindent{\bf Posterior exploration: Particle marginal Metropolis--Hastings } In the context of Bayesian computation, pseudomarginal methods \citep{Andrieu_2009} and particularly the particle MCMC methods proposed in \cite{Pmcmc}, have natural application in our framework.  Our work outlines a novel approach that leads to the unbiased estimates of various pseudolikelihoods with a low variance at a manageable computational cost. This is particularly important in the context of the {PMMH algorithm, as emphasised by \cite{Pittoptimalscaling}}.
Additionally, we will be able to achieve a better mixing performance with less dispersed likelihood estimates, 
{while also saving} computational resource relative to algorithms employing more naive estimates of the pseudolikelihood. We summarise our implementation of the PMMH in Algorithm~\ref{alg:pmmh} in Supplementary Material \ref{PMMHsup}. 
\section{Illustrative applications}\label{section5}
We now illustrate the performance of our proposed methodology on two examples, the one-dimensional cubic SDE, and the partially observed two-dimensional {hypoelliptic} stochastic FHN model. For the former, we show the explosion of the inference based on EuM compared to the considered splitting scheme. For the latter, we consider both frequentist and Bayesian inference from both simulated and real data under different observation regimes.

\noindent{\bf Implementation details.} All algorithms are implemented in the software environment {\bf R} \citep{R}, with codes available at \url{https://github.com/adamjohansen/csmc-diffusions}. We refer to Supplementary Material \ref{detailsupp} for further implementation details.
\subsection{The one-dimensional cubic SDE}\label{cubicsde_cs}
We first consider the cubic SDE as an introductory example: 
    $dX_t=-X_t^3dt+\sigma dW_t$, with $X_0=0$. 
As analysed in \cite{hutzen}, simulation-based methods that involve EuM approximation $X_k^{\textrm{EuM}}$ to the cubic SDE will suffer from divergent trajectories due to its not being a mean-square convergent scheme. We illustrate this using a Strang-generated cubic SDE  with $\sigma=100$, time step $\Delta=0.1$ and termination time $T_{\textrm{obs}}=100$. Specifically, it is generated using $\Delta_\textrm{sim}=10^{-5}$ and then subsampled with $\Delta_\textrm{obs}=0.1$. 
The data then have a length of $M_{obs}=1000$. As shown in Supplementary Material \ref{EuMSupp}, the EuM approach leads to 
explosions, and while using diffusion bridges partially mitigates this 
effect, we still observe it when the variance is 
large (e.g. in $40\%$ of cases when $\sigma=40$). Such explosions are entirely prevented when adopting the splitting schemes, as shown below.

\noindent{\bf Convergence of bridged pseudolikelihood}
As a preliminary study, consider point inference for the parameter $\sigma$. \cite{pedersen1995} provides consistency and asymptotic normality for the MLE based on the bridged {pseudo}likelihood \eqref{fullbridge} under Assumption~\ref{assumption_conv_likelihood}. We use the cubic example to explore this convergence numerically.

We generate the data using a moderate choice of diffusion parameter 
$\sigma=20$, the rest of the setup remains the same. We trace out the cSMC-inferred Lie--Trotter-implied bridged pseudolikelihood (on a log-scale) in Figure~\ref{fig:likelihood}, thus illustrating the impact of the bias-reduction method in action. The {ground} truth is simulated with 100 inter-observational steps. 
\begin{figure}[t]
    \centering    \includegraphics[width=0.6\linewidth]{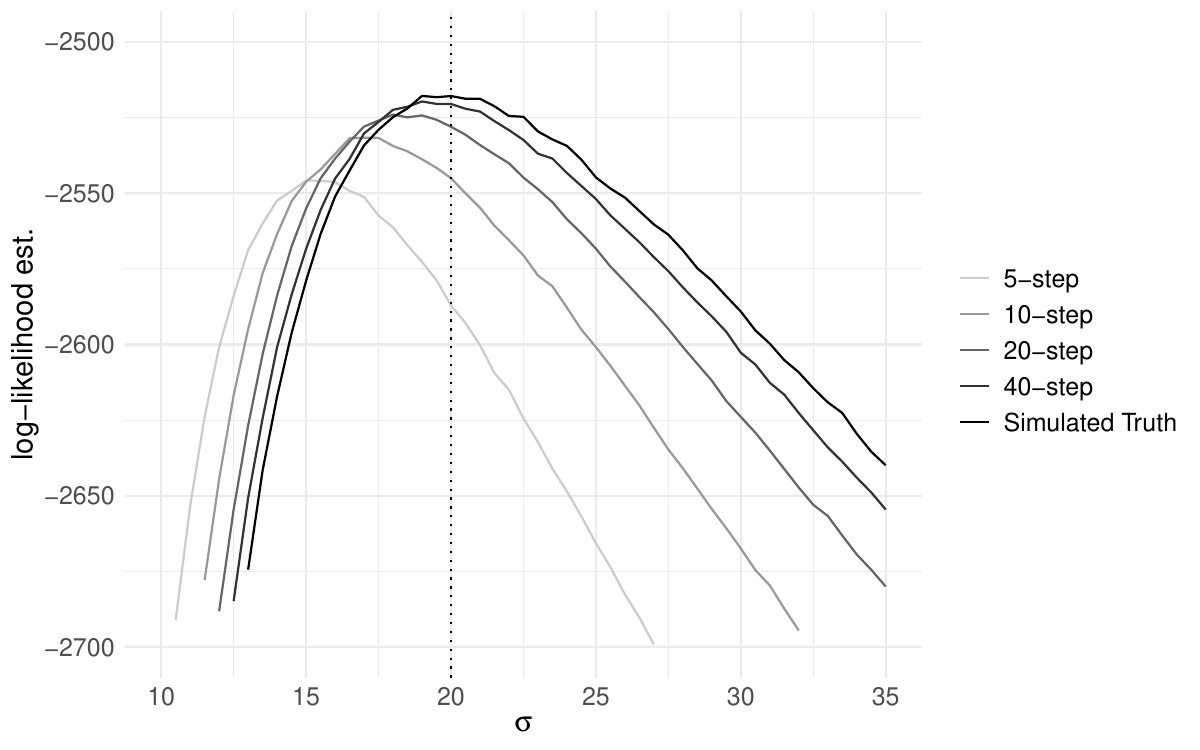}
    \caption{Inferred Lie--Trotter-implied bridged pseudolikelihoods (on a log-scale) {for the univariate cubic SDE \eqref{cubicsde_cs} for different number of timesteps in between observations}; fineness of grid: 0.5. The vertical dotted line shows the true parameter.}
    \label{fig:likelihood}
\end{figure}
Several observations follow. 
Firstly, the shape of the pseudolikelihood starts off peaking near $15$ and gradually  ``tilts'' upwards until approaching the simulated true likelihood which peaks at the truth. Secondly, the rate of the change in shape diminishes as we increase the number of bridging steps. This suggests that the bridged pseudolikelihood is indeed converging pointwise to the ``true'' likelihood. Although the cubic SDE does not admit an explicit solution, in this way, we are able to trace out what the ``actual'' likelihood should look like without employing more sophisticated time-discretisation tools.  
\subsection{The partially observed hypoelliptic FHN model}\label{fhn_cs}
The stochastic FHN model (\citealt{FitzHugh1955}) is a two-dimensional model for the single neuronal activity, with a super-linearly growing drift and additive noise:
\begin{equation}\label{FHN}
    d\begin{pmatrix}
        V_t\\U_t
    \end{pmatrix}=\begin{pmatrix}
        \frac{1}{\epsilon}(V_t-V_t^3-U_t)\\\gamma V_t-U_t+\beta
    \end{pmatrix}dt+\begin{pmatrix}
        \sigma_1&0\\
        0&\sigma_2
    \end{pmatrix}dW_t,
\end{equation}
{where $V_t$ denotes the membrane voltage, which is typically observed at discrete times, while $U_t$ is an unobserved variable modelling the ion channel kinetics.} When $\sigma_1=0$, the SDE is hypoelliptic, 
yielding a smooth transition density despite the rank-deficiency of the diffusion matrix, see, e.g., 
\citet[\textit{Condition 1}]{jrssb}. 
Hypoellipticity causes the components of the process to have variances which are not {of the same order in $\Delta$}, in the expression of the transition densities. The FHN is a good example for demonstrating our method, which does not require manual efforts to counter the effects of hypoellipticity as is the case for other methods, such as \citet{jrssb}. We forgo the possibility to use EuM, as it produces a degenerate Gaussian distribution as transition density, not being 1-step hypoelliptic{, differently from the considered splitting schemes}. 

\noindent{\bf Posterior exploration} 
In this section, we illustrate {our proposed} methodology for likelihood estimation in the context of Bayesian inference. We consider a high-frequency scenario where the number of observation is $M_{{\textrm{obs}}}=1000${, $T_{\textrm{obs}}=20$, i.e., the time step is} $\Delta_\textrm{obs}=0.02$. The true parameters are $(\epsilon,\gamma,\beta,\sigma_2)= (0.1,1.5,0.8,0.3)${ as in \cite{jrssb}}.  We impose a standard Gaussian prior on the parameters (in log-scale). 

Appealing to the baseline Feynman--Kac formulation for inferring the Strang {marginal pseudo}likelihood, with details in Supplementary Material \ref{fhn_spec}, we launch a PMMH chain of length $10^5$. We do this by using a BPF with 125 particles and a cSMC with 10 particles. The cSMC under our implementation is around twice as fast as the BPF and has negligible dispersion compared to it, as illustrated in Supplementary Material \ref{FHNsupp}.  We use the same unconditioned Gaussian random walk proposal to propose the new parameters (in log-scale) at each iteration of the PMMH.

We display the estimated {marginal} posterior densities in Figure~\ref{fig:fhn_mcmc}.  The BPF-inferred posterior densities have irregular shape, suggesting the MCMC chain is not well-mixed, stuck at several places for a significant amount of time. Instead, the cSMC-inferred posteriors are well-rounded and more informative about the actual (bell) shape of the posterior, while being significantly faster than the BPF.  
\begin{figure}[t]
    \centering
    \includegraphics[width=0.8\linewidth]{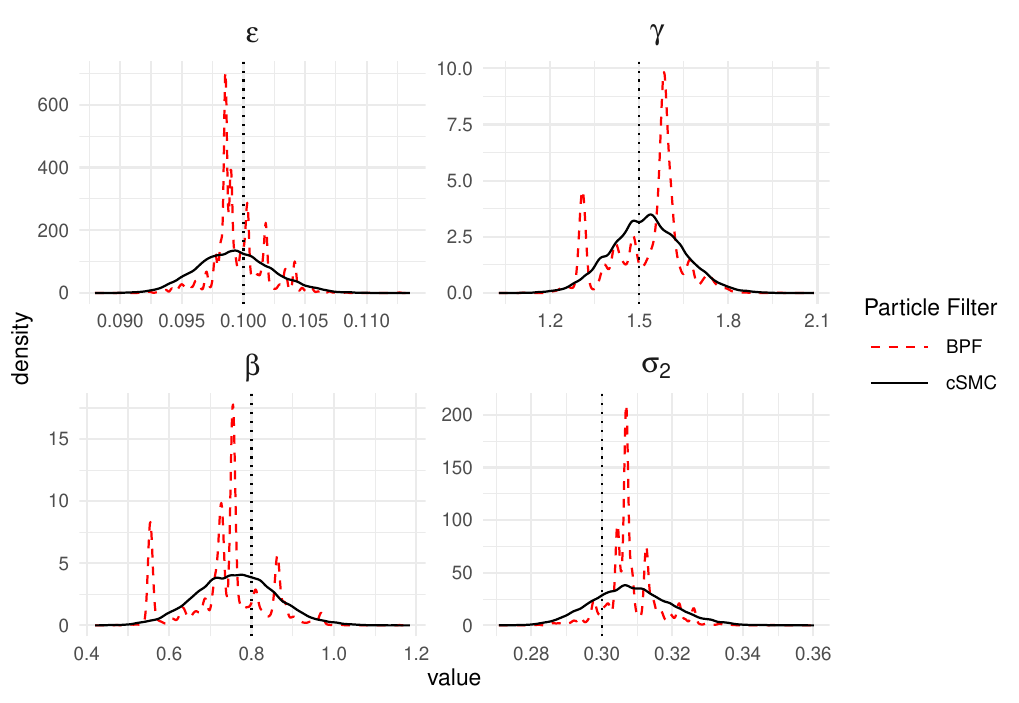}
    \caption{{Marginal} posterior density estimates {obtained via the BPF (red dashed) and the proposed cSMC approach (blue solid) for the partially observed hypoelliptic FHN model \eqref{FHN} with $\sigma_1=0$. The vertical dotted lines show the true parameters.}}
    \label{fig:fhn_mcmc}
\end{figure}

\noindent{\bf Point estimation} To illustrate the bias-reduction effect of diffusion bridging, we choose a low-frequency setting, where the number of observation is $M_{{\textrm{obs}}}=1000$, the time step is ${\Delta_{\textrm{obs}}}=0.05$ {for the same true parameters as before.}
We now compare several estimators: the unbridged Lie--Trotter and Strang MLEs, the 4-step bridged Lie--Trotter and Strang MLEs, the 8-step bridged Lie--Trotter and Strang MLEs. 

In Figure~\ref{fig:fhn_den}, we compare the density plots of the MLEs generated from 100 
independent trials using the pre-specified settings. The differences for the Lie--Trotter MLEs with 0,4,8 bridges {(grey solid, dashed, dotted lines)}, are quite clear: in particular, the shape of the distribution of the MLE \lq\lq leans\rq\rq over, changing dramatically while having its mean getting closer to the truth {as the number of bridges increases}. When we compare the  unbridged  Strang result to the unbridged Lie--Trotter, we see that the {former} has a strictly smaller bias in all four coordinates. Given that Strang has a higher weak order than Lie--Trotter, this is to be expected. Moreover, it is also very interesting to see that the estimated densities for Strang with 4 and 8 bridges {(black dashed, dotted lines)} are almost indistinguishable and peak remarkably close to the true parameters. Violin plots of the considered estimators are reported in Supplementary Material \ref{FHNsupp}.
\begin{figure}
    \centering
    \includegraphics[width=0.85\linewidth]{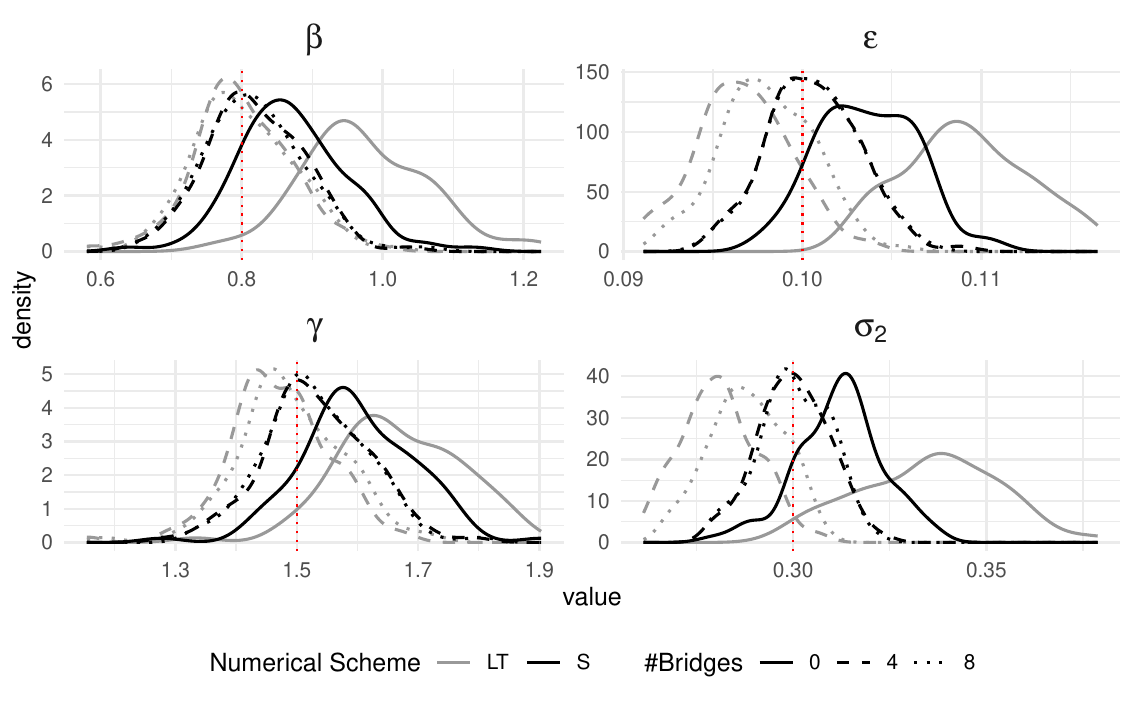}
    \caption{
   {Estimated density plots over 100 runs for Lie--Trotter MLE (grey lines) and Strang MLE (black lines) with 0 (solid lines), 4 (dashed lines), 8 (dotted lines) bridges for the partially observed hypoelliptic FHN model \eqref{FHN}  with $\sigma_1=0$}. The red vertical dotted lines show the true parameters.}
    \label{fig:fhn_den}
\end{figure}
\subsection{Real world application: Modelling neural activity}\label{application_rats}
{We now apply our methodology on membrane voltages 
recorded from the dorsal rootlet of rats. The data, available as open source and documented by \cite{ratdata},  has been recently used by \cite{SAMSON2025108095} to successfully fit a hypoelliptic FHN model, with marginal posterior densities estimated via Approximate Bayesian Computation (ABC).}
We will {focus our} attention to the dataset 1609 within the Cutaneous Stimulation category, {with data recorded with step size $\Delta_\textrm{obs}=0.02$ up to time $T_\textrm{obs}=250$, so $n_\textrm{obs}=12500$.}

{As the observed dataset is centred close to zero, with empirical mean 
0.00263, following \cite{SAMSON2025108095}, we will assume} that the data is collected as $V_{data}=V_{raw}-\mu_{raw}$, 
where $V_{data}$ is the centred data process, and $V_{raw}$  is 
the observed component for the FHN which has the censored mean $\mu_{raw}$. 
For a set of parameters $\theta$, we aim to retrieve the observed component of the FHN as $\hat{V}_{raw}^{\theta}=V_{data}+\mu^{\theta}_{raw}$, where $\mu^{\theta}_{raw}$ is an estimate of the censored mean $\mu_{raw}$. Due to the FHN being geometrically ergodic, we {will estimate $\mu^\theta_{raw}$ as the invariant mean of the observed $V$ component via Monte Carlo estimate, obtained simulating a Strang-trajectory under $\theta$ with $10^5$ observations and time step $\Delta_\text{sim}=0.02$.}

We will conduct inference in the unbridged case, following procedurally what was done to the simulated data, estimating the Strang-implied pseudolikelihood using cSMC, and maximising it using (adaptive) SPSA. The convergence of the estimator is illustrated in Supplementary Material \ref{FHNsupp}. The final point estimate was ${\hat\theta=}(0.283,\allowbreak 72.7,\allowbreak 87.3,\allowbreak 0,\allowbreak 2.46)$, producing simulated trajectories that resemble the data quite well in terms of the overall amplitudes and frequency (see Figure~\ref{sim_traj} in Supplementary Material \ref{FHNsupp}). 
{While our estimates and those of \cite{SAMSON2025108095} are comparable on simulated data, validating our proposed methodology, we observe a notable discrepancy on real data, with their ABC posterior mean being $(0.033,\allowbreak 6.701,\allowbreak 3.995,\allowbreak 0.133)$.}
One possible cause is model misspecification, suggesting that the FHN-probabilistic model misfits the data. A weak identifiability issue might also exist due to the missing mean. Finally, it is important to highlight that the two methods target different objectives, hence the discrepancy in the estimates, as \citet{SAMSON2025108095} inform the ABC well of the data's oscillatory behaviour with summary statistics, whereas our method directly maximises the inferred pseudolikelihood with no tailored information about the shape of the data. 
\section{Conclusions}
cSMC is a powerful tool when it comes to intractable likelihood problems when the latent process is sufficiently tractable. Combining it with good time-discretisation schemes{, here splitting methods,} has led to a framework that tackles inference for SDE from a likelihood-based perspective that was considered to be very challenging and inefficient.  Not only we do provide a compact package to infer about the parameters across a wide range of observation settings, but we also propose a generic way to reduce the systematic bias arose from time-discretisation.  Without the help of cSMC, this was seen as 
{both} practically infeasible, due to the extensive computational costs{, and statistically unreliable, as performing Monte Carlo sampling from high dimensional integrals would have led to estimates with large variance. While we focused on numerical splitting schemes, as they have been proved to be mean-squared convergent and property preserving for the considered class of semi-linear SDEs with additive noise, it is important to stress that the proposed methodology can be directly applied with any suitable numerical scheme, as long as the numerical transition density is an invertible function of a Gaussian.} 
Future perspective of this work may involve relaxing this Gaussian-dependent condition onto the numerical schemes. This way we might be able to incorporate even more advanced numerical schemes, further expanding the range of problems and SDEs that this method can deal with.  
\section*{Acknowledgements}
We are grateful to the Scientific Computing Research Technology Platform (SCRTP) at the University of Warwick for the provision of computational resources. SH acknowledges funding from the China Scholarship Council (CSC), RGE acknowledges the financial support of the United Kingdom Engineering and Physical Sciences Research Council (EPSRC) grant EP/W006790/1, while MT the EPSRC grant EP/X020207/1. AMJ acknowledges financial support by EPSRC grant EP/R034710/1 and by United Kingdom Research and Innovation (UKRI) via grant number EP/Y014650/1, as part of the ERC Synergy project OCEAN. For the purpose of open access, the authors have applied a Creative Commons Attribution (CC BY) licence to any Accepted Manuscript version arising from this submission. 

\noindent \textbf{Data access:} The considered real data is available on Mendeley Data at \url{https://data.mendeley.com/datasets/ybhwtngzmm/1}, while the simulated datasets can be obtained using the provided {\bf R} codes available at \url{https://github.com/adamjohansen/csmc-diffusions}.

\appendix
\section{Specification for the cubic SDE}
With the usual convention that $X_k$ denotes the process $X$ at time $t_k$, the EuM discretisation \eqref{EuM} of the cubic SDE \eqref{cubicsde_cs} is given by:
\[
X^{\textrm{EuM}}_k=X^{\textrm{EuM}}_{k-1}-\Delta(X^{\textrm{EuM}}_{k-1})^3 +\xi_{\Delta,k}^{\textrm{EuM}}, \qquad\xi^{\textrm{EuM}}_{\Delta,k}\overset{\text{iid}}{\sim}N(0,\Delta\sigma^2),
\]
so $f_k^{\textrm{EuM}}(x_k|x_{k-1})=\mathcal{N}(x_k;x_{k-1}-\Delta x_{k-1}^3,\Delta\sigma^2)$, with $\Delta, x_{k-1}, x_k$ replaced by $\delta, x_{k;n-1}, x_{k;n}$ when considering the bridged approximated transition density $f_{k;n}(x_{k;n}|x_{k;n-1})$.

In \cite{buckwar2022splitting}, the cubic SDE is decomposed as \eqref{subeq1}-\eqref{subeq2}, with $A=-1$ and $\gamma(X^{[2]}_t)=X^{[2]}_t-(X^{[2]}_t)^3$, leading to a LT  \eqref{LT} and Strang \eqref{S} schemes, with $\xi_{\Delta,k}\stackrel{\text{iid}}{\sim}N(0,\sigma^2(1-e^{-2\Delta}))$ and $\Gamma_\Delta(x)={x}/\sqrt{e^{-2\Delta}+x^2(1-e^{-2\Delta})}$, and thus $f_k^{\text{LT}}(x_k|x_{k-1})=\mathcal{N}\left(x_{k};e^{-\Delta}\Gamma_\Delta(x_{k-1}),\sigma^2(1-e^{-2\Delta}\right)$, with $f_{k;n}(x_{k;n}|x_{k;n-1})$ immediately obtained from it, replacing $\Delta, x_{k-1}, x_k$ with $\delta, x_{k;n-1}, x_{k;n}$, respectively. 
\section{Specification for the partially observed FHN}\label{fhn_spec}
{Following \cite{buckwar2022splitting}, the FHN SDE \eqref{FHN} is decomposed as \eqref{subeq1}-\eqref{subeq2}, with: 
\begin{eqnarray}
   A= \begin{pmatrix}
     0&-\frac{1}{\epsilon}\\
     \gamma &-1
    \end{pmatrix},\qquad
    \gamma(X_t)=\begin{pmatrix}
        \frac{1}{\epsilon}(V_t-V_t^3)\\
        \beta
    \end{pmatrix},
\end{eqnarray}
leading to a LT  \eqref{LT} and Strang \eqref{S} splitting schemes. Denoting $\kappa=4\gamma/\epsilon-1$, for $\kappa>0$, the quantities entering into these schemes are:}
\begin{eqnarray*}
    e^{A\Delta} &=& e^{-\frac{\Delta}{2}} 
\begin{pmatrix}
\cos\left(\frac{1}{2}\sqrt{\kappa \Delta}\right) + \frac{1}{\sqrt{\kappa}}\sin\left(\frac{1}{2}\sqrt{\kappa \Delta}\right) & 
-\frac{2}{\epsilon \sqrt{\kappa}}\sin\left(\frac{1}{2}\sqrt{\kappa \Delta}\right)\\
\frac{2\gamma}{\sqrt{\kappa}}\sin\left(\frac{1}{2}\sqrt{\kappa \Delta}\right) & 
\cos\left(\frac{1}{2}\sqrt{\kappa \Delta}\right) - \frac{1}{\sqrt{\kappa}}\sin\left(\frac{1}{2}\sqrt{\kappa \Delta}\right)
\end{pmatrix},\\
    C(\Delta) &=& 
\begin{pmatrix}
c_{11}(\Delta) & c_{12}(\Delta) \\
c_{12}(\Delta) & c_{22}(\Delta)
\end{pmatrix},
\\
c_{11}(\Delta) &=& \frac{\sigma^2 e^{-\Delta}}{2\epsilon \gamma \kappa} \left(-\frac{4\gamma}{\epsilon} + \kappa e^{\Delta} + \cos(\sqrt{\kappa \Delta}) - \sqrt{\kappa}\sin(\sqrt{\kappa \Delta})\right), \\
c_{12}(\Delta) &=& \frac{\sigma^2 e^{-\Delta}}{\kappa \epsilon} \left(\cos(\sqrt{\kappa \Delta}) - 1\right), \\
c_{22}(\Delta) &=& \frac{\sigma^2 e^{-\Delta}}{2\kappa} \left(\cos(\sqrt{\kappa \Delta}) + \sqrt{\kappa}\sin(\sqrt{\kappa \Delta}) - \frac{4\gamma}{\epsilon} + \kappa e^{\Delta}\right),
\end{eqnarray*}
{with $\cos\left(\sqrt{\kappa}t/2\right) = \cosh\left(\sqrt{-\kappa}t/2\right)$ and $  
\sin\left(\sqrt{\kappa}t/2\right)/\sqrt{\kappa} = \sinh\left(\sqrt{-\kappa}t/2\right)/\sqrt{-\kappa}$ if $\kappa<0$. Moreover, $\Gamma_\Delta, \Gamma_\Delta^{-1}$ and $D\Gamma_\Delta$ are given by:} 
\begin{equation*}
    \Gamma_{\Delta}\begin{pmatrix}
        x\\y
    \end{pmatrix} = 
\begin{pmatrix}
\frac{x}{\sqrt{e^{-\frac{2\Delta}{\epsilon}} + x^2 \left(1 - e^{-\frac{2\Delta}{\epsilon}}\right)}} \\
\beta \Delta + y
\end{pmatrix},\qquad    \Gamma_{\Delta}^{-1}\begin{pmatrix}
        x\\y
    \end{pmatrix}=\begin{pmatrix}
        \textrm{sign}(x)\sqrt{\frac{e^{-\frac{2\Delta}{\epsilon}}x^2}{1-(1-e^{-\frac{2\Delta}{\epsilon}})x^2}}\\y-\beta\Delta
    \end{pmatrix},
\end{equation*}
\begin{equation*}
    D\Gamma_{\Delta}\begin{pmatrix}
        x\\y
    \end{pmatrix}=\begin{pmatrix}
        \left(e^{-\frac{2\Delta}{\epsilon}} + v^2 (1 - e^{-\frac{2\Delta}{\epsilon}})\right)^{-\frac{1}{2}} - \frac{v^2 (1 - e^{-\frac{2\Delta}{\epsilon}})}{\left(e^{-\frac{2\Delta}{\epsilon}} + v^2 (1 - e^{-\frac{2\Delta}{\epsilon}})\right)^{3/2}}\\1
    \end{pmatrix}.
\end{equation*}
\section{Proof of Proposition 1}\label{Proof}
\begin{proof}[Proof of Proposition \ref{prop1}.]
Rearrangement and Fubini's Theorem yield:
\begin{align*}
    &f(v_{0:M}) =\int_{\mathcal{U}^{M+1}}\mu^1(v_0) \mu^2(u_0|v_0) \prod_{k=1}^{M} f_k^1(v_{k}|v_{k-1},u_{k-1})f_k^2(u_k|v_k,u_{k-1},v_{k-1})du_{0:{M}}\\
    =&\int_{\mathcal{U}}\int_{\mathcal{U}^{M}}\left[ \mu^1(v_0) \mu^2(u_0|v_0) f_1^1(v_1|v_{0},u_{0}) \prod_{k=1}^{M-1} f_{k+1}^1(v_{k+1}|v_{k},u_{k})f_k^2(u_k|v_k,u_{k-1},v_{k-1}) \right.\\
    &\left.\times f_{M}^2(u_M|v_M,u_{M-1},v_{M-1})\right] du_{0:{M-1}} d u_{M}\\
     =& \int_{\mathcal{U}^{M}}\left[\mu^1(v_0) \mu^2(u_0|v_0) f_1^1(v_1|v_{0},u_{0}) \prod_{k=1}^{M-1} f_{k+1}^1(v_{k+1}|v_{k},u_{k})f_k^2(u_k|v_k,u_{k-1},v_{k-1})\right. \\
    &\left.\times \int_{\mathcal{U}} f_{M}^2(u_M|v_M,u_{M-1},v_{M-1})\right]d u_{M}du_{0:{M-1}} \\
    =&\int_{\mathcal{U}^{M}}\mu^1(v_0) f_1^1(v_1|v_{0},u_{0}) \mu^2(u_0|v_0) \prod_{k=1}^{M-1} f_{k+1}^1(v_{k+1}|v_{k},u_{k})f_k^2(u_k|v_k,u_{k-1},v_{k-1})du_{0:{M-1}}\\
    =&\int_{\mathcal{U}^{M}}G_0(u_0)M_0(u_0)\prod_{k=1}^{M-1}G_k(u_k)M_k(u_{k-1},u_k)du_{0:M-1},
\end{align*}
with $G_0, M_0, G_k, M_k$ defined as in \eqref{prop1eq1}-\eqref{prop1eq2}.
\end{proof}

\section{SPSA algorithm}\label{SPSAsupp}
The SPSA method, proposed by \cite{spsa}, is a numerical optimisation method suitable for optimising functions from which only noisy evaluations are possible. It is a recursive method which simultaneously updates all coordinates, requiring only a small number of noisy objective function evaluations per iteration. In scenarios in which the dominant contribution to computational cost is the evaluation of the objective function, {here, the pseudolikelihood,} SPSA will lead to a significant improvement in efficiency compared with other stochastic approximation methods that typically explore the impact of perturbations in one (or a subset of all) dimension(s) at a time. We summarise our implementation of the SPSA in Algorithm~\ref{alg::spsa}, where $\hat{L}^{cSMC}(\theta)$ is the cSMC-estimated likelihood of a given $\theta$; in the numerical optimisation context, this is often referred to as a \textit{measurement}. We make clear that gains $a_k$ and directions $c_k$ in the algorithm are sequences of scalars, with scaling matrices $s_k$ in $\mathbb{R}^{p\times p}$ (with further implementation details reported in Supplementary Material \ref{detailsupp}). 
\begin{algorithm}[t]
\caption{Simultaneous Perturbation Stochastic Approximation {(SPSA)}}\label{alg::spsa}
\textbf{Input}: gains $(a_k)$, directions $(c_k)$, scaling matrices $(s_k)$, initial $\theta_0$, \# steps $\tau_{end}$.
\begin{algorithmic}
\State Set $k=1$, compute $p=\textrm{dim}(\theta_0)$ the dimension of the parameter vector  
\While{$k<\tau_{end}$}
\State sample $\Delta_k$ uniformly over $\{-1,+1\}^p$ independently of $\Delta_{1:k-1}$
\State compute $y_k^{\pm} = \hat{L}^{cSMC}(\theta_{k-1}\pm c_k\Delta_k)$ and $g_k=\frac{y_k^+-y_k^-}{2c_k\Delta_k}$
\State update $\theta_{k}=\theta_{k-1}+a_ks_kg_k$; $k \leftarrow k+1$
\EndWhile
\end{algorithmic}
\textbf{Output}: The maximum likelihood estimate $\theta_{\tau_{end}}$
\end{algorithm}
We have adapted a basic notion of parameter scaling: the SPSA searches each direction by the same amount at each iteration and updates along each direction a scaled amount determined by $s_k$. For any numerical demonstration included in this work, we have adopted the adaptive scaling machine $s_k$ described by \citet[equation 2.1]{adaptive_spsa} that takes an extra {measurement} to approximate the second-order stochastic gradient. Our implementation consequently uses 5 {measurements} for each proposal. 

We now outline our initialisation procedure. The initialisation is typically robust for the unbridged case, as the variance of the cSMC estimate is quite small compared with the bridged case as a manifestation of the curse of dimensionality. The variance of the cSMC estimate typically grows with the number of latent bridging steps. As a result, initialisation for the unbridged case requires the minimal manual effort. In the bridged case, we propose to initialise the SPSA by a run of the unbridged case, using only half of the particles, and start the SPSA from the resulting set of parameters.  For a deterministic stopping rule $\tau_{end}$, this will marginally increase the cost of running a $K$-step bridged inference by a factor of $(1+1/2K)$.

\section{Particle marginal Metropolis-Hastings algorithm}\label{PMMHsup}
We summarise our implementation of the PMMH in Algorithm~\ref{alg:pmmh}.
\begin{algorithm}[t]
\caption{PMMH Algorithm targeting the posterior distribution}\label{alg:pmmh}
\textbf{Input}: initial state $\theta_{0}$, proposal $q({\theta^*{\mid}\theta})$, prior $\pi(\theta)$, \# iterations $I$
\begin{algorithmic}
\State compute $L_0=\hat{L}^{cSMC}(\theta_0)$
\For{$i=1,\ldots, I$}
\State sample $\theta^*$ from $q({\theta^*{\mid}\theta_{i-1}})$ and estimate $L^*=\hat{L}^{cSMC}(\theta^*)$
\State with probability $\min(1,\frac{L^*\pi(\theta^*)q({\theta^*{\mid}\theta_{i-1}})}{L_{i-1}\pi(\theta_{i-1})q({\theta_{i-1}{\mid}\theta^*})})$ set $(\theta_i,L_i)=(\theta^*,L^*$);
\State otherwise, set $(\theta_i,L_i)=(\theta_{i-1},L_{i-1})$.
\EndFor
\end{algorithmic}
\textbf{Output}: $(\theta_{i})_{i=1}^N$
\end{algorithm}
\section{Implementation details}\label{detailsupp}
We implemented our method in {\bf R} \citep{R}. All components of the algorithms have been coded from scratch including Algorithms~\ref{alg:pf_simple} and \ref{alg::adp} and the cSMC flow detailed by Figure~\ref{fig:csmcflow} in the main manuscript. We used the \texttt{mvnfast} package \citep{mvnfast} for fast sampling from multivariate Gaussians. Occasionally, Algorithm~\ref{alg::adp} will produce a non-positive definite $A_k${, leading to a  corresponding degenerate Gaussian variable}. 
We adopt a postponing strategy and replace such variables with one that is component-wise independent and has flat densities (big variances) in all directions. 
{Hence,} the density is an analogy of a constant policy (hence a BPF), with the hope that it will be refined at the next iteration of the cSMC, as in fact observed in all explored examples. We implemented the adaptive SPSA detailed by \cite{adaptive_spsa} {as described in Supplementary Material \ref{SPSAsupp}}, respecting guidance of choosing search and gain sequences $c_k$, $a_k$ from \cite{spsa} and scaling matrices $s_k$ from \cite{adaptive_spsa}. For each measurement of SPSA, we found that 20 particles per iteration of the cSMC are enough to deliver stable estimates. {In particular, in the bridged case, we fixed the number of iterations of SPSA to be 200, and observed that an average run of cSMC takes 2 or 3 iterations to converge.} For the bridged cases, we found that for our length of observation $M_{obs}=1000$, the naive BPF evaluated at a current set of parameters is not a good starting point of the cSMC and led severe exhaustion of particles. We therefore appealed to the coarse initialisation trick detailed by \citet[Chapter 11]{pierthesis}, while also storing the policies generated from the last measurement of the SPSA as the starting policies for the next measurement (the latter only to save computing time). We recall no manual effort when working with the PMMH other than a slight pre-conditioning of the symmetric Gaussian proposal $q(\theta^*|\theta_{i-1})$ with guidance provided by \citet[Chapter 16]{introtosmc}. We report that all plots have been generated with \texttt{ggplot2} \citep{ggplot} in {\bf R}; for plots showing kernel density estimates, the \emph{rule of thumb} suggested by \citet[p. 48]{silverman1986} was employed to choose the bandwidth.

All case studies have been run on multiple core SCRTP Taskfarm at the University of Warwick (Intel Xeon Gold 6248R,
3.0 GHz, 24-core processors). To show the portability of the method, the reported runtimes have been obtained on an Apple M3 pro (P cores 4.05 GHz, E cores 2.75 GHz, 12 cores).
\begin{figure}[t]
    \centering    \includegraphics[width=.9\textwidth]{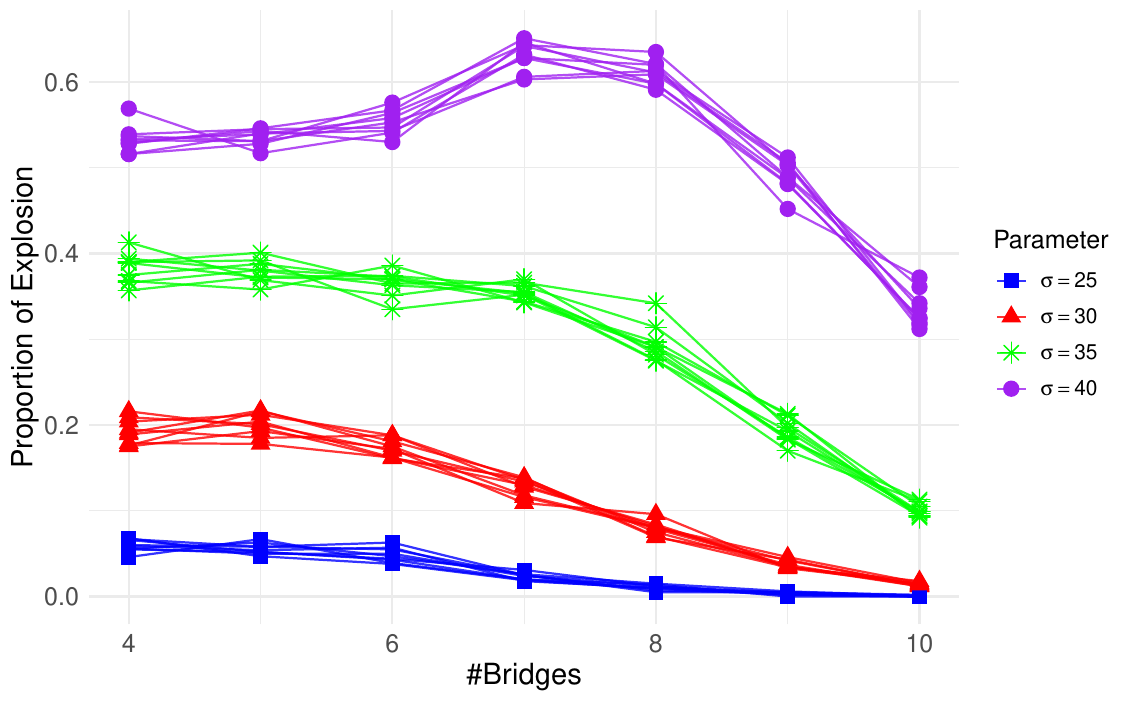}
    \caption{Illustration of the EuM blowing up {for the univariate cubic SDE \eqref{cubicsde_cs}}: 
Proportion of observation intervals out of 1000 where explosion {(i.e. values above $10^5$)} happened as a function of the number of latent bridging steps for different values of $\sigma$.}
    \label{fig:eumexplode}
\end{figure}
\section{Explosion of EuM-based cSMC}\label{EuMSupp}
Using Proposition~\ref{propbridge_1}, we derive the Feynman--Kac flow $\{G_{k;n},M_{k;n}\}_{n=1}^{K-1}$ to infer the $K$-step bridged EuM transition densities at time points $\{t_k\}_{k=0}^{M_{obs}}$. The cSMC in its first iteration, which is a standard BPF, attempts to simulate a trajectory, consisting of $K-1$ steps starting from $x_{k;0}=x_{k-1}$, using the EuM dynamic. Such trajectories will be corrected at the end of the bridge using the only non-constant potential function $G_{K-1}$.
{As expected}, using the EuM scheme, we indeed observed explosions between observations, happening at an increasing chance as the diffusion parameter $\sigma$ grows{, and decreasing as the number of latent bridging steps increases (and thus the time step $\delta$ decreases). This can be observed in Figure \ref{fig:eumexplode}, where we report the proportion of intervals out of 1000
(for ten different trials for each chosen $\sigma$), in which  several particles \lq\lq explode\rq\rq (i.e., go above $10^5$) after being propagated through the bridge in the first iteration of the cSMC, while others \lq\lq properly behave\rq\rq, staying near the $[-5,5]$ range.} As a result, the next-in-line step of the cSMC, which is the linear regression policy learning, will perform poorly with outliers, and demand manual efforts to a large extent. 

\section{Additional results for the FHN model}\label{FHNsupp}
\begin{figure}
    \centering
    \includegraphics[width=0.9\linewidth]{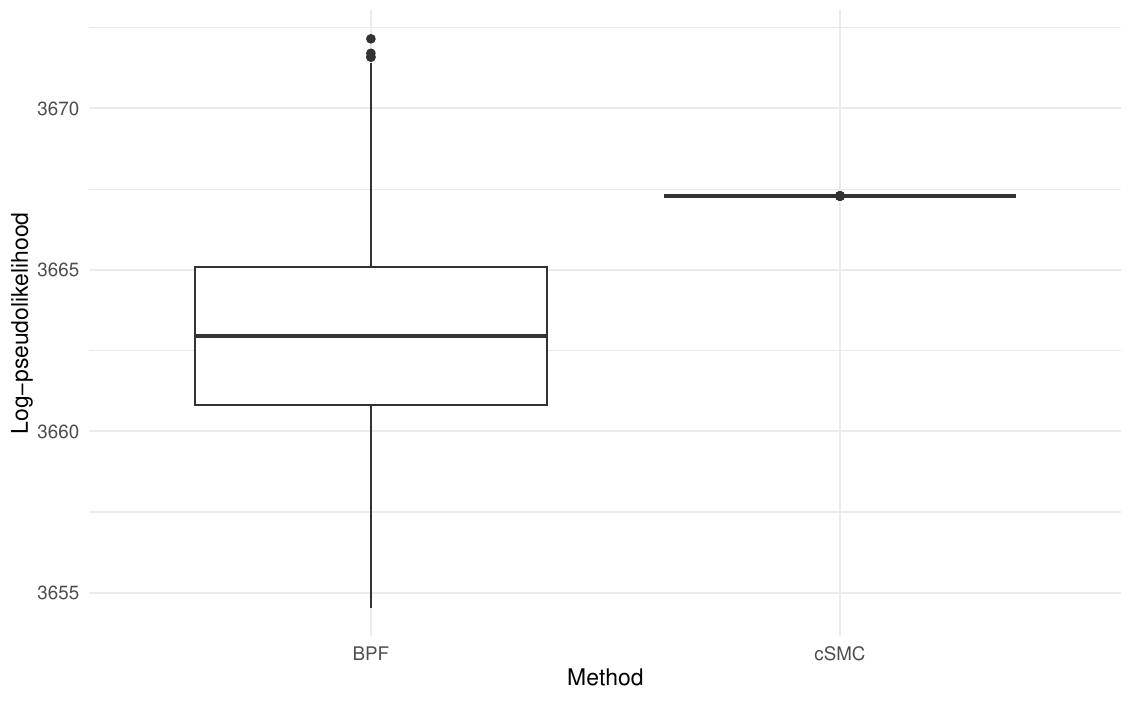}
    \caption{Boxplot of partial pseudolikelihood (on a log-scale) evaluated at the posterior mode from 1000 independent runs of both methods.}
    \label{fig:bpfvcsmc}
\end{figure}
\noindent {\bf Simulated data --- Posterior exploration} Here, we start by providing additional results for the posterior exploration of the partially observed FHN model described in Section \ref{fhn_cs} of the main manuscript. As a demonstration of the {performance of} cSMC, a comparison {between the partial pseudolikelihood estimated via} 
BPF and cSMC evaluated at the posterior mode $(0.0993, 1.53 ,0.763 ,0.308)$ is reported in Figure~\ref{fig:bpfvcsmc}. In this rather extreme case, the cSMC estimate behaves almost like a constant, with visibly no dispersion compared with the BPF estimates.

\noindent {\bf Simulated data --- Point estimation} {In Figure ~\ref{fig:fhn_violin}, we report the violin plots (with embedded boxplots) of the six Lie--Trotter and Strang MLEs with 0, 4 and 8 bridges obtained for the partially observed FHN model, complementing Figure \ref{fig:fhn_mcmc} therein.} 
Although an overall strict superiority is shown by the Strang scheme, we see a competing performance between the bridged Lie--Trotter MLE and the bridged Strang MLE on the $\gamma$ and $\beta$ coordinates. 

Finally, we comment on the runtime to reproduce Figure \ref{fig:fhn_den} in the main manuscript. With an entirely serial implementation in {\bf R}, the per-measurement time for the 8-step bridged trials is roughly 20 seconds, while for the 4-step bridged case it is 10 seconds. We let the SPSA run until convergence, taking at most 200 iterations. With 5 measurement per iteration of the adaptive SPSA, each independent trial takes at most 4.5 hours in the 8-step bridged case. For real data with a length of 12500, we report that we let SPSA to run 1500 iterations until convergence, with each iteration costing roughly 50 seconds.

\begin{figure}[t]
    \centering
    \includegraphics[width=0.85\linewidth]{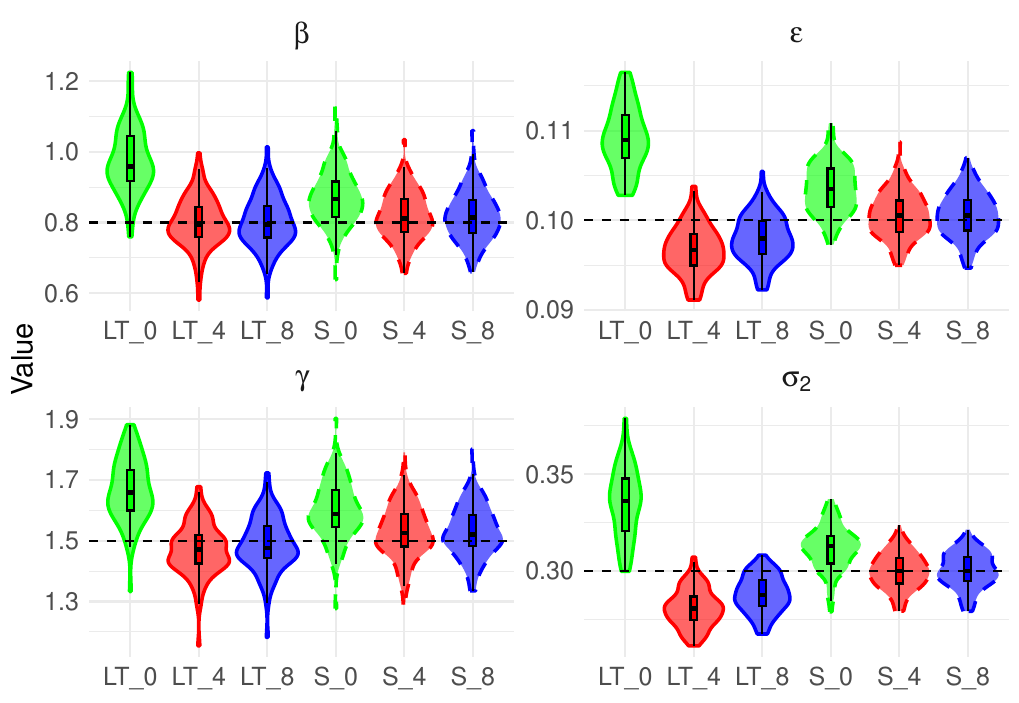}
    \caption{Violin plots {(with embedded boxplots) for the Lie--Trotter (LT) and Strang (S) MLEs with 0, 4, and 8 bridges. The true parameters are represented with horizontal lines.}}
     \label{fig:fhn_violin}
\end{figure}

\noindent{\bf Real data} Here, we complement {the details} and results on inference for the real dataset reported in Section \ref{application_rats} of the main manuscript. {As we have no prior information about the parameters, we did a few very short exploratory SPSA runs and set $(0.2,50,50,0.1)$
{as starting point.} }   
In Figure \ref{fig:1609spsa}, we report the convergence trajectory for a single run of the SPSA algorithm used to maximise the Strang-implied partially observed pseudolikelihood estimated via cSMC, {while in Figure \ref{sim_traj} we report the true and simulated data under the inferred parameters.}
\begin{figure}
    \centering    \includegraphics[width=0.8\textwidth]{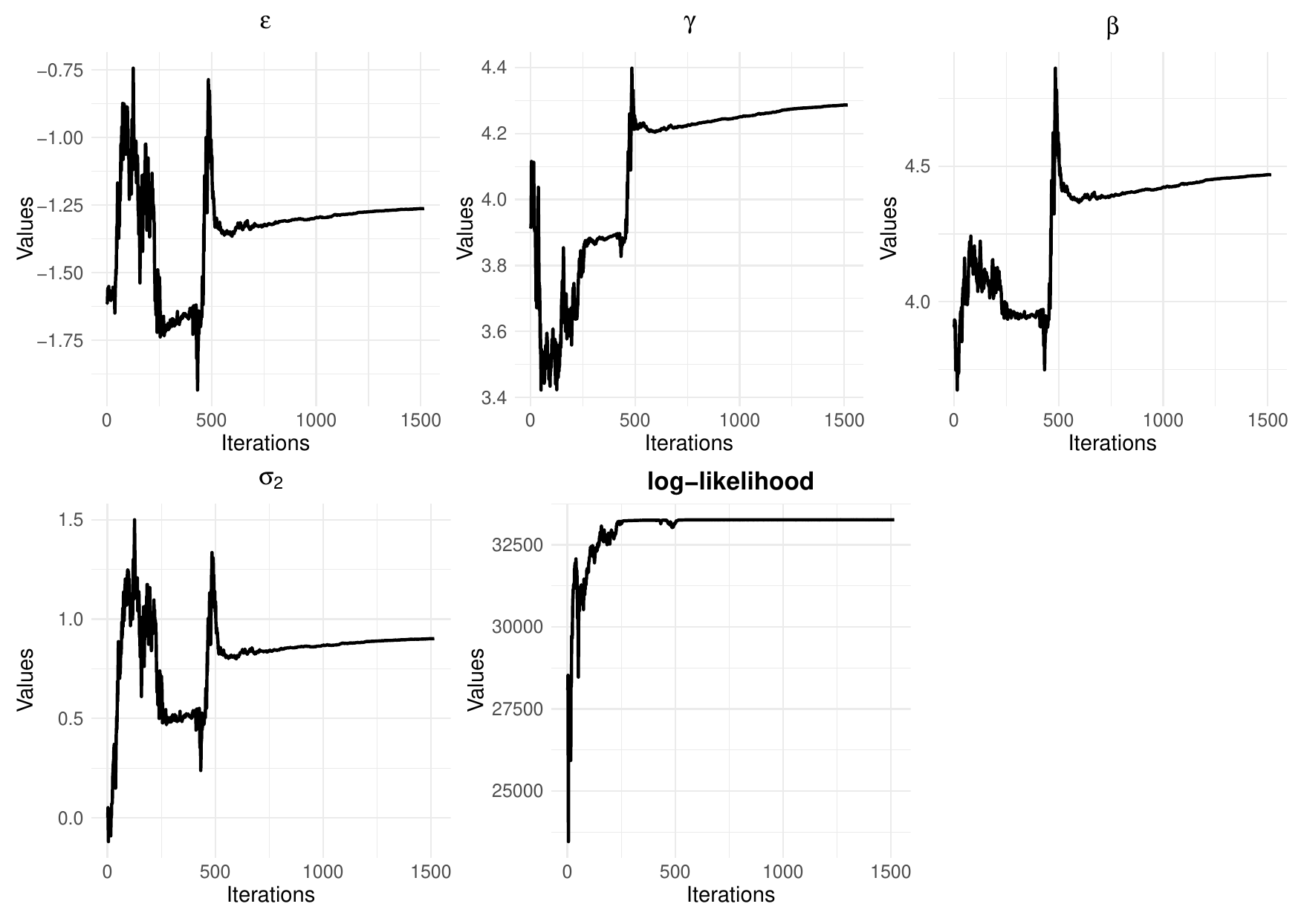}
    \caption{Convergence trajectory for a single run of SPSA. Parameters are on log-scale. }
    \label{fig:1609spsa}
\end{figure}
\begin{figure}
  \centering
\includegraphics[width=0.85\textwidth]{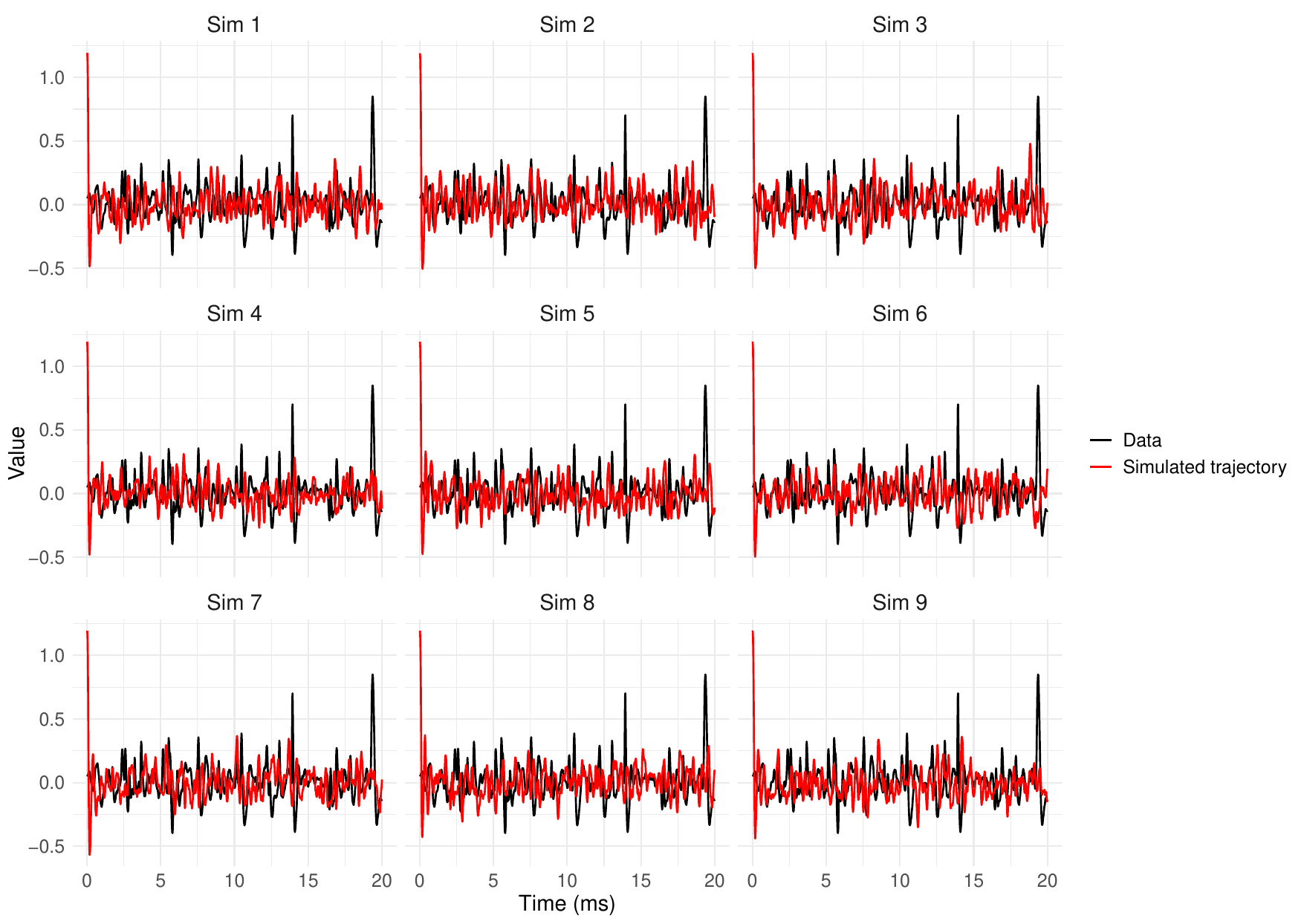}
  \caption{Simulated trajectories compared to observed data in 20 ms.}
  \label{sim_traj}
\end{figure}

\end{document}